\crefname{hypothesis}{Hypothesis}{Hypotheses}
\newtheorem{assumption}{Assumption}
\newtheorem{prop}{Proposition}
\title{Quantifying Uncertainty for Temporal Motif Estimation in Graph Streams under Sampling
}
\author{Xiaojing Zhu\footnotemark[1] 
\and Eric D. Kolaczyk\thanks{Department of Mathematics and Statistics, Boston University, Boston, MA 02215 USA (\email{xiaojzhu@bu.edu}, \email{kolaczyk@bu.edu}).}}
\begin{document}

\maketitle

\begin{abstract}
Dynamic networks, a.k.a. graph streams, consist of a set of vertices and a collection of timestamped interaction events (i.e., temporal edges) between vertices. Temporal motifs are defined as classes of (small) isomorphic induced subgraphs on graph streams, considering both edge ordering and duration. As with motifs in static networks, temporal motifs are the fundamental building blocks for temporal structures in dynamic networks. Several methods have been designed to count the occurrences of temporal motifs in graph streams, with recent work focusing on estimating the count under various sampling schemes along with concentration properties. However, little attention has been given to the problem of uncertainty quantification and the asymptotic statistical properties for such count estimators. In this work, we establish the consistency and the asymptotic normality of a certain Horvitz-Thompson type of estimator in an edge sampling framework for deterministic graph streams, which can be used to construct confidence intervals and conduct hypothesis testing for the temporal motif count under sampling. We also establish similar results under an analogous stochastic model. Our results are relevant to a wide range of applications in social, communication, biological, and brain networks, for tasks involving pattern discovery.  
\end{abstract}

\begin{keywords}
  graph streams, temporal motifs, networks, sampling, asymptotics, counting process, uncertainty quantification
\end{keywords}


\section{Introduction}
In the age of big data, streaming data has become ubiquitous in various modern applications, for example, e-commerce purchases, real-time surveillance, click-streams from websites, and players' activities in online gaming.
Consequently, there is a growing literature on streaming data analysis including clustering, classification, anomaly detection, pattern mining, etc. (see survey papers \cite{gomes2019machine, ramirez2017survey}).
Streaming data is a temporally ordered and potentially infinite sequence of objects that arrive continuously over time. A particularly important subclass of such data is the streaming relational data, which arrives not only continuously, but also carries the relational information about a pair of nodes interacting. 
Examples are data streams from 
electronic communication (e.g., phone calls, emails), Internet of things (e.g., routers, connected devices) and content sharing platforms (e.g., Facebook, Twitter) that record interactions between two individuals (or entities) with timestamps. 
Due to its relational nature, this type of streaming data has been closely linked with the literature on network analysis and graph algorithms, and referred to by a variety of names - dynamic/temporal networks, temporal graphs, edge/graph streams, etc. (we use these terms mutually interchangeablely in this paper). 

Leveraging concepts and tools in network analysis, much work has been done on the analysis of graph streams, including anomaly detection \cite{aggarwal2011outlier, eswaran2018sedanspot}, pattern matching \cite{ chen2017unified, sanei2019fleet, song2014event}, etc., among which an important effort revolves around developing scalable approaches for streams with increasingly large volume and high velocity. Sampling has become a central tool and various sampling methods were proposed under this context for a variety of tasks in temporal networks \cite{ahmed2021online, ahmed2017sampling, rocha2017sampling, shin2017wrs}. 
In this work, we study one such problem in this area, concerning the use of sampling to quantify the frequency of motifs in temporal networks. 
We provide uncertainty quantification in a certain sampling model by studying the asymptotic properties of the motif count estimator as the stream flows indefinitely. 
This is relevant to a wide range of applications in social networks \cite{kovanen2013temporal}, communication networks \cite{ tu2018network, zignani2017temporal}, biological networks \cite{chechik2008activity, prvzulj2007biological}, brain networks \cite{battiston2017multilayer, wei2017identifying}
etc., for pattern discovery (see \cite{liu2021temporal} for a recent survey).

\subsection{Related Work}
Here we give a summary of the current development on network motif counting and identify the research gap we aim to fill with our work. 

Recurring subgraph patterns in static networks, termed `motifs' \cite{milo2002network}, are considered basic structural elements in understanding networks and their underlying complex systems. Counting motifs has been an important statistical and computational problem in the analysis of static networks. In the context of counting motifs in large network graphs, enumerating all occurrences of a given motif could be prohibitively expensive. Therefore, network sampling and estimation are often used to provide an approximation to the true motif count in a computationally efficient manner. Much work has been done in this direction for estimating motif counts under various sampling schemes for static networks \cite{kashtan2004efficient, klusowski2018counting}. Very recently, Bhattacharya et al. (2020) \cite{bhattacharya2020motif} made an advance on the development of statistical inference framework for the motif estimation problem in static networks, where they provided results on the asymptotic properties of a motif count estimator in the subgraph sampling model.

On the other hand, as part of the still-emerging field of dynamic (or temporal) networks 
\cite{holme2015modern,  holme2019temporal}, several ways have been proposed to extend the notion of motifs to the context of graph streams, consisting of a set of vertices and a collection of time-stamped interaction events \cite{hulovatyy2015exploring, kovanen2011temporal, paranjape2017motifs, song2014event}. One widely used notion is from Paranjape et al. (2017) \cite{paranjape2017motifs}, where temporal motifs are defined as classes of isomorphic induced subgraphs on sequences of temporal edges, considering both edge ordering and duration. Similar to motifs in static networks, temporal motifs are also the fundamental building blocks for temporal structures in dynamic networks.

Several methods have been designed to count the occurrences of temporal motifs in graph streams (see \cite{jazayeri2020motif} for a survey), with recent work focusing on estimating the count under various sampling schemes along with establishing concentration properties \cite{liu2018sampling, wang2020efficient, sarpe2021presto}. However, little attention has been given to the problem of uncertainty quantification and the asymptotic statistical properties of these temporal motif count estimators. This paper aims to fill this gap by studying the asymptotics in motif estimation.

\subsection{Our Contributions and Paper Outline}
In this work, we consider the problem of estimating the temporal motif count (i.e., the number of temporal motifs in a graph stream) under an edge sampling model, 
where each edge is sampled independently with probability $p$ and the local motif count around the sampled edge is observed. 
We provide uncertainty quantification under this sampling model through a study of asymptotic properties.
Specifically, we establish the consistency and the asymptotic normality for a certain Horvitz-Thompson type estimator under sampling (proposed by Wang et al. (2020) \cite{wang2020efficient}) in deterministic graph streams, as the number of temporal edges grows indefinitely. We also establish similar results under an analogous stochastic model. These results can be used to construct confidence intervals and conduct hypothesis testing for the temporal motif count under this sampling model. While the proof of the asymptotic results for the deterministic case is relatively straightforward, the proof for the stochastic case is nontrivial. The challenge in the latter case arises from the need to analyze the behavior of a stream of dependent random variables that emerge in our problem as a result of applying a type of sliding window to an underlying marked Poisson point process. More details are provided in proofs in appendix.

The paper is organized as follows. Notation and background are in section \ref{sec:background}, our main results are in
section \ref{sec:main}, experimental
results are in section \ref{sec:numerical}, and some discussion follows in
section \ref{sec:discussion}.

\section{Background}
\label{sec:background}
In this section, we provide essential notation and background.
\subsection{Notation and Definitions}

We first provide formal definitions for temporal graphs and temporal motifs. 

\begin{definition}[Temporal graph] \label{def-graph}
A temporal graph $T_m = \{(u_i , v_i , t_i ) = (e_i, t_i)$, $i = 1, \cdots, m\}$ on node set $V_m$ is defined as a collection of timestamped directed edges $e_i$, a.k.a., temporal edges, where each $u_i$ and $v_i$ are elements of $V_m$ with $u_i \neq v_i$, each $t_i$ is a timestamp in $\mathbb{R}^+$, and $t_1< t_2, \cdots, <t_m$. 
\end{definition}

\begin{definition}[$\delta$-temporal motif \cite{paranjape2017motifs}]\label{def-motif}
A $k$-node, $l$-edge $\delta$-temporal motif 
$H = \{(u_1, v_1, t_1), (u_2, v_2, t_2), \cdots, (u_l, v_l, t_l)\}$ is a sequence of $l$ edges that are time-ordered within a $\delta$ duration, i.e., $t_1< t_2, \cdots, <t_l$ and $t_l-t_1 \leq \delta$, such that the induced static graph from the edges is connected and has $k$ nodes.
\end{definition}
\textit{Remark.} The induced static graph $G$ is obtained from temporal graph $T$ by ignoring all timestamps of edges, i.e., each edge $(u, v)$ in $G$ is associated with a temporal edge $(u, v, t)$ in $T$. Notice that in Definitions \ref{def-graph} and \ref{def-motif}, edges are allowed to be recurrent, meaning that the same edge can occur at different time points. Then the induced static graph from the defined temporal graph or temporal motif could be a multi-graph. 

In Definition \ref{def-motif}, a temporal motif provides a template for a particular pattern in a specified duration of time $\delta$. We are interested in counting the number of occurrences of such a pattern in a given temporal network. An occurrence of an $l$-edge $\delta$-temporal motif $H$ in a given temporal network $T_m$ is defined as $l$ temporal edges in $T_m$ satisfying three conditions: 1) the static multi-graph induced from the $l$ temporal edges is isomorphic to that induced from the temporal motif; 
2) the ordering of the matching edges are the same; 3) the $l$ temporal edges are within the duration time $\delta$. It is also referred to as a $\delta$-instance of motif $H$ by Liu et al.\cite{liu2018sampling}, the formal definition of which is provided below. Throughout the paper, we usually refer to a $\delta$-instance of motif $H$ simply as an instance of motif $H$ when no confusion is likely. 

\begin{definition}[motif $\delta$-instance\cite{liu2018sampling}]\label{def-instance}
A time-ordered sequence
$S = \{(w_1,x_1,t_1^\prime),\cdots, (w_l ,x_l ,t_l^\prime)\}$ of $l$ temporal edges from a given temporal graph $T_m$ is a $\delta$-instance of temporal motif $H =
\{(u_1,v_1,t_1),\cdots,(u_l ,v_l,t_l )\}$ if 1) there exists a bijection $f$ on the vertices such that $f(w_i)=u_i$, $f(x_i)=v_i$, $i=1, \cdots, l$; 2) the edges all occur within duration $\delta$, i.e., $t_l^\prime - t_1^\prime < \delta$. 

\end{definition}

\subsection{Temporal Motif Estimator under Edge Sampling Regime}
Suppose $T_m = \{(u_i , v_i , t_i ) = (e_i, t_i)$, $i = 1, \cdots, m\}$ is a temporal graph on node set $V_m$, $H$ is a $k$-node, $l$-edge $\delta$-temporal motif. Let $C(H, T_m)$ denote the number of instances of temporal motif $H$ in temporal graph $T_m$. The goal is to estimate $C(H, T_m)$ via some sampling regime. We now formally describe one of the state-of-the-art sampling methods for temporal motif estimation, the edge sampling regime\cite{wang2020efficient}. 

The exact count of temporal motif $H$ in $T_m$ can be written as
 \begin{equation}
 C(H, T_m) = \frac{1}{l} \sum_{i = 1}^m \eta(e_i),
 \end{equation}
where $\eta(e_i)$ is the number of instances of 
temporal motif $H$ in $T_m$ containing an edge $e_i$, which can be regarded as the local motif count. $\eta(e_i)$ can be calculated using a well-established backtracking algorithm\cite{mackey2018chronological} with time complexity
$O(l r^{l-1})$, 
where $r$ is the expected number of edges within time span $\delta$. To get the exact count of the number of instances of $H$ in $T_m$, we first obtain the local count $\eta(e_i)$ for each $e_i \in T_m$ and sum them up. Then the total number of instances of $H$ in $T_m$ can be obtained by dividing the sum by $l$ as each instance contains $l$ edges thus being counted $l$ times in obtaining the local counts for all edges in $T_m$. 

In an edge sampling regime proposed by \cite{wang2020efficient}, 
each temporal edge in $T_m$ is sampled independently with probability $p \in (0, 1)$. Let $\omega_i$ be the indicator of the event that edge $e_i$ is sampled. Then the estimator for $ C(H, T_m) $ is 
 \begin{equation}\label{estimator}
\hat C(H, T_m) = \frac{1}{p l } \sum_{i = 1}^m \omega_i\eta(e_i),
 \end{equation}
Note that $E[ \hat{C}(H, T_m)] = C(H, T_m)$, hence it is an unbiased estimator for the count $C(H, T_m)$. It is also a Horvitz-Thompson type of estimator as it uses inverse probability weighting to achieve unbiasedness.

\subsection{Asymptotic Notation}
We use the following standard notation for the asymptotic behavior of the relative order of magnitude of two sequences of numbers. For two positive sequences $\{a_n\}_{n\geq1}$ and $\{b_n\}_{n\geq1}$, 
\begin{itemize}
    \item $a_n = O(b_n)$ means $a_n \leq C_1b_n$ for all $n$ large enough and positive constant $C_1$.
    \item $a_n = \Omega(b_n)$ means $a_n \geq C_2b_n$ for all $n$ large enough and positive constant $C_2$.
    \item $a_n = \Theta(b_n)$ means $C_2b_n \leq a_n \leq C_1b_n$, for all $n$ large enough and positive constants $C_1$, $C_2$. 
    \item $a_n \asymp b_n$ if $a_n = \Theta(b_n)$. This is sometimes expressed by saying that $a_n$ and $b_n$ are of the same order of magnitude.  
    \item $a_n \lesssim b_n$ means $a_n = O(b_n)$, $a_n \gtrsim b_n$ means $a_n = \Omega(b_n)$.
    \item $a_n \gg b_n$ means $b_n = o(a_n)$, i.e. $\lim_{n\rightarrow \infty} b_n/a_n =0$.
\end{itemize}
When discussing asymptotics of random variables, we use the following standard notation. Let $X_n$ be random variables and $a_n$ positive real numbers, we define
\begin{itemize}
    \item $X_n = O_p(a_n)$ if for every $\delta >0$ there exist constants $C_\delta$ and $n_0$ such that $P(|X_n| \leq C_\delta a_n) > 1-\delta$ for every $n \geq n_0$. This is also expressed by saying that the sequence $X_n/a_n$ is bounded in probability. 
    \item $X_n = o_p(a_n)$ if 
    for every $\epsilon >0$, $ P(|\frac{X_n}{a_n}| > \epsilon)  \rightarrow 0$, as $n \rightarrow \infty$. This is also expressed by saying that the sequence $X_n/a_n$ converges in probability to zero. 
\end{itemize}
Two convergence concepts for random variables are used in this paper. 
\begin{itemize}
    \item $X_n$ converges in probability to the random variable $X$ as $n\rightarrow \infty$, shown by $X_n \xrightarrow{p} X$, as $n\rightarrow \infty$, if $\forall \epsilon>0$, $P(|X_n-X|>\epsilon) \rightarrow 0$, as $n\rightarrow \infty$.
    \item $X_n$ converges in distribution to the random variable $X$ as $n\rightarrow \infty$, shown by $X_n \xrightarrow{d} X$, as $n\rightarrow \infty$, if $P(X_n \leq x) \rightarrow P(X\leq x)$, as $n\rightarrow \infty$, for every real $x$ that is a continuity point of $P(X\leq x)$.
\end{itemize}  
\section{Main results}
\label{sec:main}

\subsection{Consistency and Central Limit Theorem for Deterministic Graph Streams}\label{subsection:det}
For deterministic temporal graph $T_m$, we establish conditions under which $\hat  C(H, T_m)/  C(H, T_m)$ converges to 1 in probability as $m \rightarrow \infty$, as shown in Theorem \ref{the:con-det}. To derive the asymptotic normality of the estimator we consider the rescaled statistic, and provide conditions under which the estimator is asymptotically normal in Theorem \ref{the:clt-det}. In this setting where $T_m$ is deterministic, edge sampling is the only source of randomness.

\begin{theorem}[Consistency]
Suppose $T_m = \{(u_i , v_i , t_i ) = (e_i, t_i)$, $i = 1, \cdots, m\}$ is a temporal graph on node set $V_m$. $H$ is a $k$-node, $l$-edge, $\delta$-temporal motif. If 
\begin{equation}\label{1}
   \frac{\sum_{i=1}^m\eta^2(e_i)} {(\sum_{i=1}^m\eta(e_i))^{2}} \rightarrow 0 \text{ as } m\rightarrow \infty,
\end{equation}
then 
\begin{displaymath}
\frac{\hat  C(H, T_m)}{C(H, T_m)} \xrightarrow{p}1.
\end{displaymath}
\label{the:con-det}
\end{theorem}

\begin{theorem}[Central Limit Theorem (CLT)]
Suppose $T_m = \{(u_i , v_i , t_i ) = (e_i, t_i)$, $i = 1, \cdots, m\}$ is a temporal graph on node set $V_m$. $H$ is a $k$-node, $l$-edge, $\delta$-temporal motif. If 
\begin{equation}\label{2}
\frac{\sum_{i=1}^m\eta^3(e_i)} {(\sum_{i=1}^m\eta^2(e_i))^{3/2}} \rightarrow 0 \text{ as } m\rightarrow \infty,
\end{equation}
then
\begin{displaymath}
Z(H, T_m):= \frac{\hat C(H, T_m) - C(H, T_m)}{\sqrt{var[\hat C(H, T_m)]}} \overset{d}{\rightarrow} N(0, 1 ), \text{ as } m \rightarrow \infty.
\end{displaymath}
\label{the:clt-det}
\end{theorem}

The proofs of Theorems \ref{the:con-det} and \ref{the:clt-det} are given in Appendix \ref{appendix:con-det} and \ref{appendix:clt-det}, respectively. To better understand the implications of the conditions in \eqref{1} and \eqref{2}, we offer two alternative assumptions, i.e., 
Assumption \ref{3} and Assumption \ref{4}, and show their relationship with the above consistency/CLT conditions, as summarized in Lemma \ref{lem-1}.

\begin{assumption} \label{3}
For sequence $\{\eta(e_i)\}_{i=1}^m$,  $\eta(e_i) = O(1)$ (i.e., $\eta(e_i) \leq C_2 < \infty$ for $i$ large enough, where $C_2$ is a constant), and $C(H, T_m) \gg m^{1/2}$ (i.e., $\lim_{m\rightarrow \infty} m^{1/2}/C(H, T_m) = 0$), as $m\rightarrow \infty.$
\end{assumption}

\begin{assumption} \label{4}
For sequence $\{\eta(e_i)\}_{i=1}^m$,  $\eta(e_i) = O(1)$, and $C(H, T_m) \gg m^{2/3}$, as $m\rightarrow \infty.$
\end{assumption}

\begin{lemma}
Assumption \ref{3} implies assumption \eqref{1}. Assumption \ref{4} implies assumption \eqref{1} and \eqref{2}. 
\label{lem-1}
\end{lemma}

The proof of Lemma \ref{lem-1} is provided in Appendix \ref{appendix_proof_lemma}. 
Combining Theorem \ref{the:con-det}, Theorem \ref{the:clt-det} and Lemma \ref{lem-1}, we can see that to assure consistency and a CLT for the count estimator in \eqref{estimator} 
requires sufficiently frequent appearance of the given temporal motif $H$ in $T_m$, but without any particular edge(s) dominating. Note that consistency requires $C(H, T_m) \gg m^{1/2}$, while the CLT requires $C(H, T_m) \gg m^{2/3}$ -- basically higher frequency of motif appearance is needed for the CLT than for consistency.

\textit{Remark.} The consistency and CLT for $\hat C(H, T_m)$ provided in Theorem \ref{the:con-det} and Theorem \ref{the:clt-det} are established under the assumption that the sampling probability $p$ is a constant. It might be of interest to have it vary with $m$. Consistency and CLT results can also be obtained under such a scenario. The sufficient conditions will change to $C(H, T_m) \gg p_m^{-1/2}m^{1/2}$ for consistency to hold and $C(H, T_m) \gg [p_m(1-p_m)]^{-1/3}m^{2/3}$ for CLT to hold. These conditions can be naturally derived from the proof for constant $p$. 

We now use the results above to construct asymptotically valid confidence intervals for the count $C(H, T_m)$, as shown in \eqref{equ:CI} below. To this end, a consistent estimator for the variance of the count estimator is introduced (\eqref{equ:var} below). These results are summarized in Proposition \ref{prop:ci}, the proof of which is provided in Appendix \ref{appendix_proof_prop}
\begin{prop}
For a temporal graph  $T_m = \{(u_i , v_i , t_i ) = (e_i, t_i)$, $i = 1, \cdots, m\}$, a $k$-node, $l$-edge $\delta$-temporal motif $H$, and a sampling ratio $p$. Suppose $\eta(e_i) = O(1)$ and $C(H, T_m) \gg m^{2/3}$. Then as $m \rightarrow \infty$, the following holds. 
\begin{enumerate}
    \item Let 
    \begin{equation}\label{equ:var}
        \hat \sigma^2(H, T_m) := \frac{1-p}{p^2_m l^2}  \sum_{i = 1}^m \omega_i \eta^2(e_i).
    \end{equation}
    Then $\hat \sigma^2(H, T_m)$ is a consistent estimate of $\sigma^2(H, T_m): =  \frac{1-p}{p l^2}  \sum_{i = 1}^m \eta^2(e_i)$, that is, $\frac{\hat \sigma^2(H, T_m)}{ \sigma^2(H, T_m)} \rightarrow_P 1 $.
    \item 
    \begin{equation}\label{equ:CI}
    \begin{split}
        P\Big( C(H, T_m) \in \big [ 
        \hat C(H, T_m) - z_{\alpha/2} \hat \sigma(H, T_m), & \\ \hat C(H, T_m) + z_{\alpha/2} & \hat \sigma(H, T_m)
        \big]\Big) 
        \rightarrow 1-\alpha \enskip ,
    \end{split}
    \end{equation}
    where $z_{\alpha/2}$ is the $(1-\frac{\alpha}{2})$-th quantile of the standard normal distribution $N(0, 1)$.  
\end{enumerate}
\label{prop:ci}
\end{prop}

\subsection{Consistency and CLT for a Stochastic Graph Stream}\label{sec:sto} 
The results above show that certain characteristics of local/global motif counts for a deterministic temporal network are sufficient to ensure that consistency and a CLT hold for the count estimator under edge sampling. If the temporal networks are instead realizations from a stochastic model, it is natural to ask under what assumptions with respect to that model the consistency and asymptotic normality continue to hold, where the underlying randomness is now in both the edge sampling and the generation of the network. 

In this section, we leverage the results in Section \ref{subsection:det} to derive sufficient
conditions for the desired asymptotic behavior under a stochastic model for temporal graphs that is essentially a classical random graph with Poisson arrivals of edges.
Instead of $T_m$, we use $T(\tau) = \{(e_i, t_i), i=1,\cdots, N(\tau)\}$ to represent a random temporal graph, with a random number of edges $N(\tau)$ occurring in the time interval $(0,\tau]$, in which we assume 
\begin{enumerate}[label=(\roman*)]
		\item
 		$t_1, t_2,\cdots, t_{N(\tau)}$ are arrival times from a Poisson process with rate $\lambda$, and
		\item 
		each $e_i$ is sampled uniformly among the set of all node pairs, at each arrival time $t_i$.
\end{enumerate}
Hence, $N(\tau) \sim Poisson(\lambda \tau)$. Here we state the main result as Theorem \ref{the:clt-st}, the proof of which is deferred to Appendix \ref{appendix:proof_clt-st}.

\begin{theorem}
[Asymptotics under marked Poisson point process]
\label{the:clt-st}
Define an observation time interval $[0, \tau]$. Let $T(\tau) = \{ (e_i, t_i)$, $i = 1, \cdots, N(\tau)\}$ be a random temporal graph as defined above. Let $|V|$ be the total number of vertices in $T(\tau)$. Then for any $k$-node, $l$-edge $\delta$-temporal motif $H$, if 
the following hold,
\begin{enumerate}
    \item $\lambda>0$, and $\lambda = O(1)$, as $\tau \rightarrow \infty $,
    \item $|V|<\infty$ is fixed,
 as $\tau \rightarrow \infty $,
\end{enumerate}
it follows that 
$\hat C(H, T(\tau))/C(H, T(\tau)) \xrightarrow{p}1$ 
and the rescaled statistic $Z(H, T(\tau)) \overset{d}{\rightarrow}  N(0,1), \text{ as }\tau\rightarrow \infty$.
\end{theorem}

Theorem \ref{the:clt-st} shows that the motif count estimator is consistent and asymptotically normal under this Poisson counting process model whenever the rate of the process is positive and bounded, and the total number of vertices in the edge stream is fixed as the stream continues for an infinite amount of time. There is a rich literature on modeling edge streams by counting process (see \cite{matias2018semiparametric} for a summary). These assumptions on the rate of edge appearance in the graph stream are more practical for validation on real data than the previous assumptions on the magnitude of true motif counts. Note that the above CLT established for the Poisson counting process model also ensures the validity of the confidence intervals for the motif count shown in \eqref{equ:CI} under this graph stream model.

\section{Numerical Illustration}
\label{sec:numerical}

In this section, we perform experiments on both synthetic and real-world data. Using synthetic data, we illustrate the practical impact of consistency and asymptotic normality of the count estimator under both deterministic and stochastic cases. We also evaluate the coverage probability of the confidence intervals on a real data set with different sampling ratios. Code for reproducing our simulation is available at \url{https://github.com/KolaczykResearch/TempMotifEstim}.

\subsection{Simulation} 

\subsubsection*{Deterministic Case}
We consider the sampling distribution of the motif count estimator for a fixed temporal network, which is chosen to be a realization from a random graph model. We present results for two random temporal graph models. 

We first choose this fixed temporal network to be a realization from a homogeneous Poisson process model with uniformly chosen edges as described in Section \ref{sec:sto}. We simulate one temporal network with length $m = 7000$ from the model on $|V| = 100$ vertices for each rate $\lambda$ varying between $25$ and $250$. For each simulated network, we sample with probability $0.03$ and estimate the number of cyclic triangles (shown as (d) in Figure \ref{query_motifs}) in it repeatedly and calculate mean and standard deviation of $\hat C(H, T_m)/C(H, T_m)$ over $100$ replications. Results are shown in Figure \ref{fig:1_det_m1} (Left) for a range of 8 rate values. We can see that as $\lambda$ increases, the mean stabilizes at $1$ and the size of the error bar decreases, signaling a better estimate of $C(H, T_m)$. This is because more edges are expected to occur under higher $\lambda$ value, thus more temporal triangles are expected to be formed within a certain time duration, and one of the conditions that determines the consistency of $\hat C(H, T_m)$ is a relatively high frequency of appearance for the target motif. Figure \ref{fig:1_det_m1}  (Right) illustrates the asymptotic normality of the motif count estimator for the number of cyclic triangles in a fixed temporal network with length $m=200000$, $|V|=100$ vertices simulated under rate $\lambda=250$. We set the sampling ratio to be $0.03$, and plot the histogram of $\hat C(H, T_m)/C(H, T_m)$ over $5000$ replications, which, as expected, is centered around $1$ and aligns well with the normal density curve shown in blue.

We then choose this fixed temporal network to be a realization from a Poisson process stochastic block model \cite{matias2018semiparametric} where interactions between each pair of nodes are counted by a homogeneous Poisson process with intensity driven by the community structure of nodes. This model is a natural extension of the standard stochastic block model to the case of multivariate counting processes for recurrent interaction events. We simulate one temporal network with length $m=7000$ from the model on $|V|=100$ vertices with two equally sized blocks, fixed off-diagonal intensity of $0.06$, for each diagonal intensity varying between $0$ and $0.2$, where diagonal intensity refers to the intensity for interactions between two nodes from the same group, and off-diagonal intensity refers to that from different groups. For each simulated network, we sample with probability $0.03$ and estimate the number of cyclic triangles in it repeatedly and calculate mean and standard deviation of $\hat C(H, T_m)/C(H, T_m)$ over $100$ replications. Results are shown in Figure \ref{fig:2_det_m2} (Left) for a range of $8$ diagonal intensity values. We can see that as the diagonal intensity increases, the mean stabilizes at $1$ and the size of the error bar decreases, signaling a better estimate of $C(H, T_m)$. This is also because more temporal triangles are expected to be formed within a certain time duration under higher diagonal intensity where individuals from the same group interact more frequently, and again one of the conditions that determines the consistency of $\hat C(H, T_m)$ is a relatively high frequency of appearance for the target motif. Figure \ref{fig:2_det_m2} (Right) illustrates the asymptotic normality of the motif count estimator under this setting.

\begin{figure}[htbp]
\centering 
\includegraphics[scale=0.4]{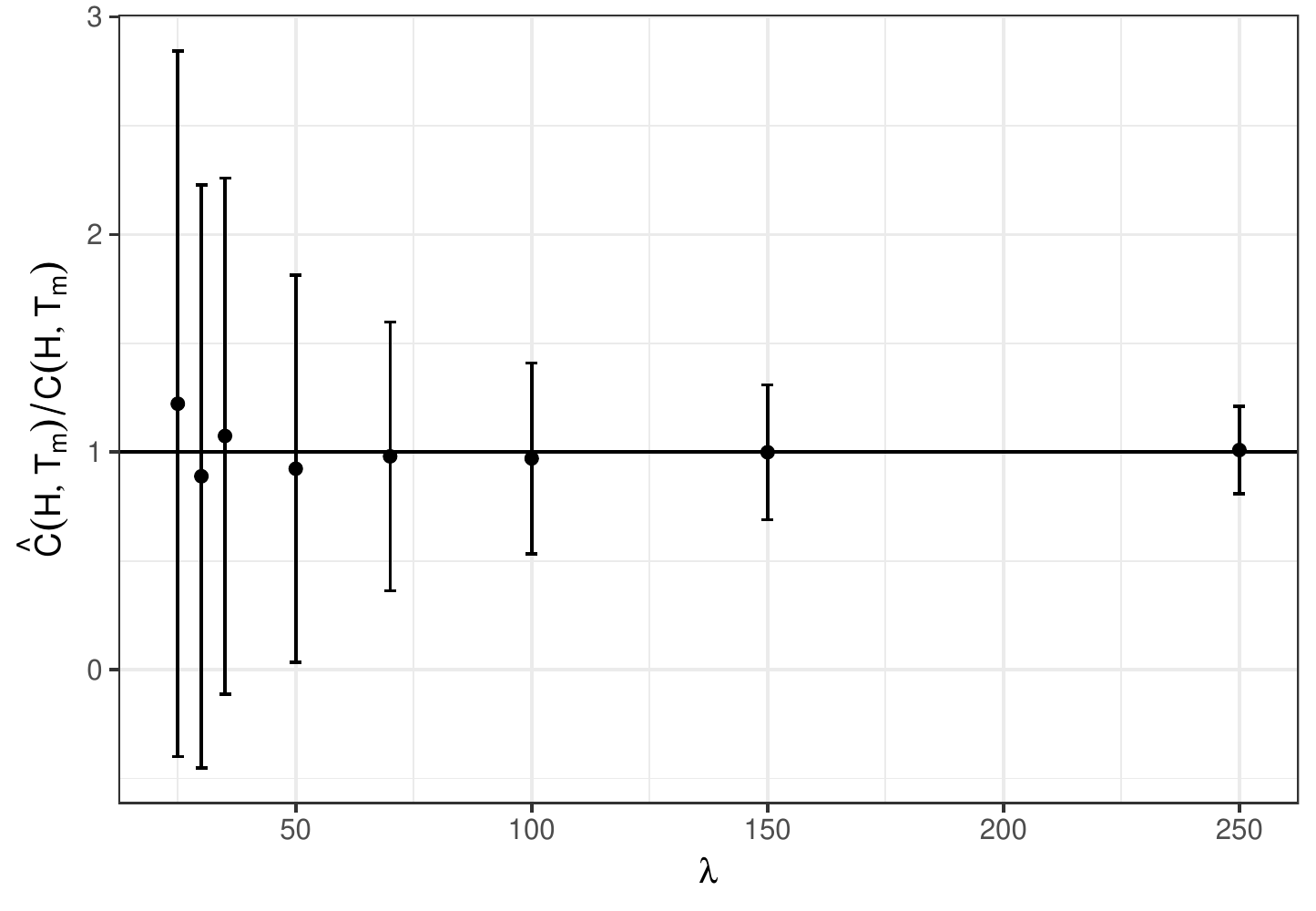}
\includegraphics[scale=0.4]{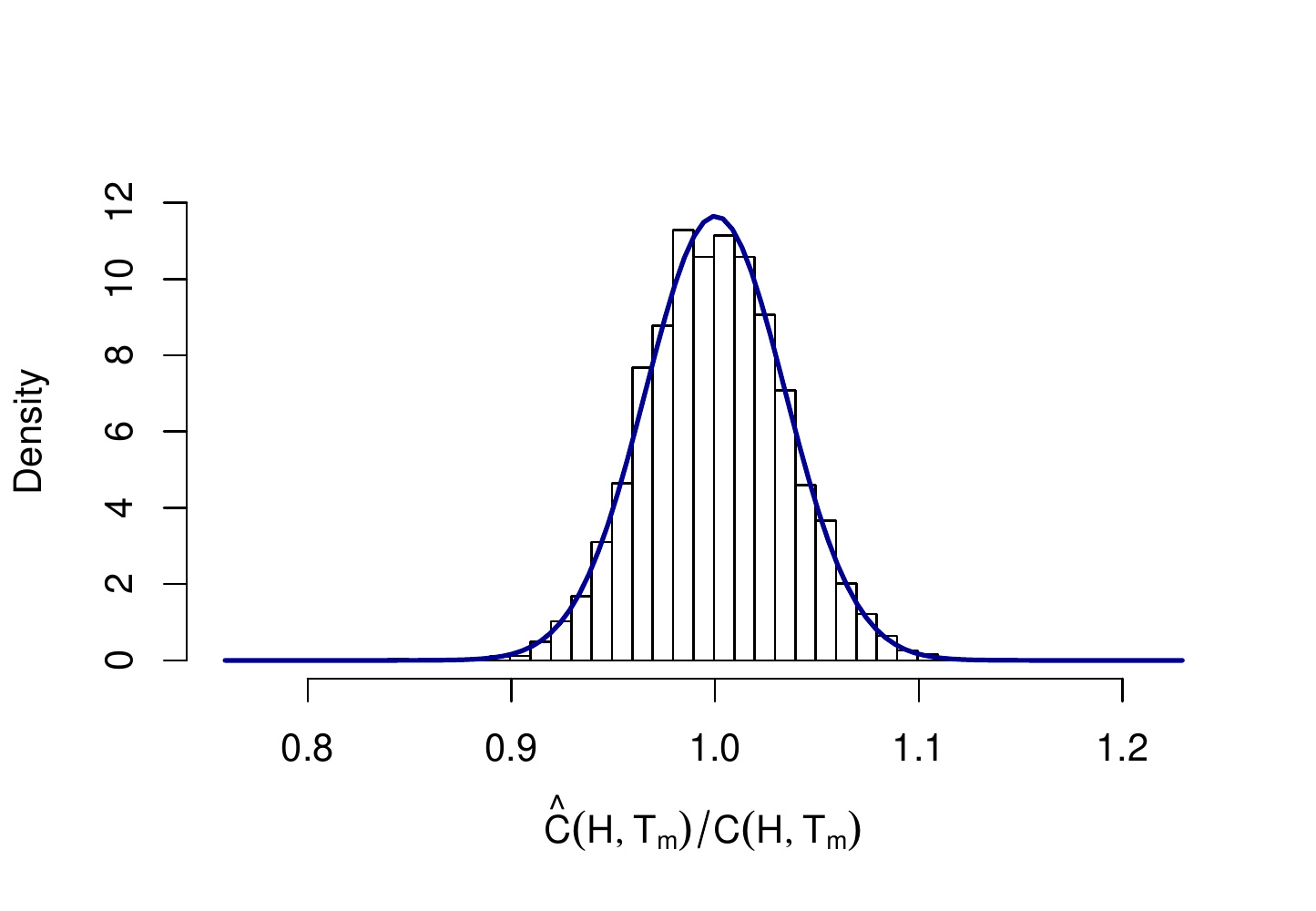}
\caption{Results for deterministic networks simulated from Poisson process uniform model. {\normalfont Left}: Empirical 1-standard deviation error bars for $\hat C(H, T_m)/ C(H, T_m)$ in estimating the number of temporal triangles (directed and cyclic, $\delta=2$) for each deterministic network generated from a Poisson process uniform model with rate $\lambda$ ranging from $25$ to $250$. Sampling ratio $p=0.03$, $100$ replications, $|V|=100, m=7000$. {\normalfont Right}: Histogram of $\hat C(H, T_m)/ C(H, T_m)$ in Poisson process uniform model with rate $\lambda=250$, sampling ratio $p=0.03$, $5000$ replications, $|V|=100, m=200000$, and the limiting normal density in blue.}
\label{fig:1_det_m1}
\end{figure}

\begin{figure}[htbp]
\centering 
\includegraphics[scale=0.4]{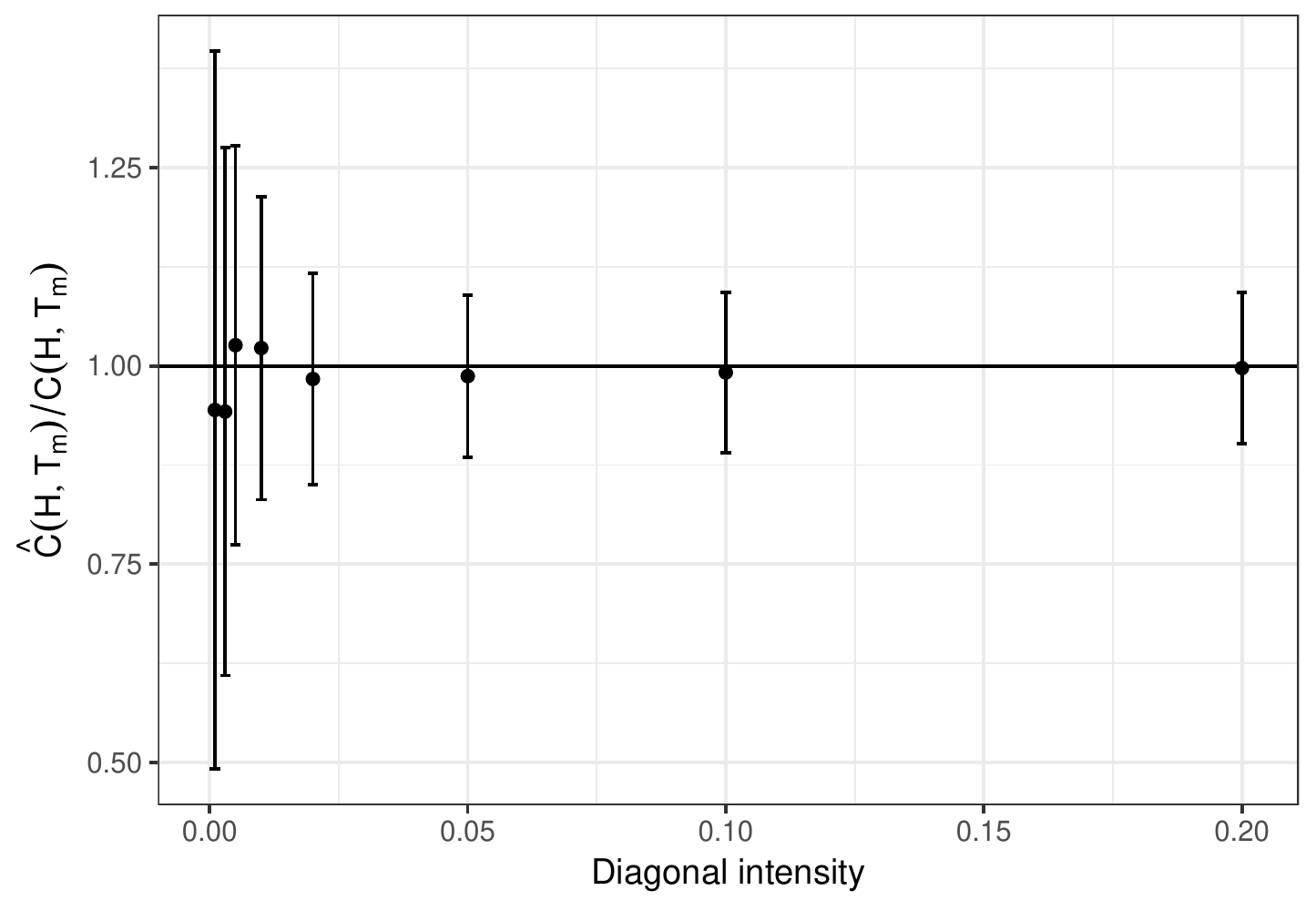}
\includegraphics[scale=0.4]{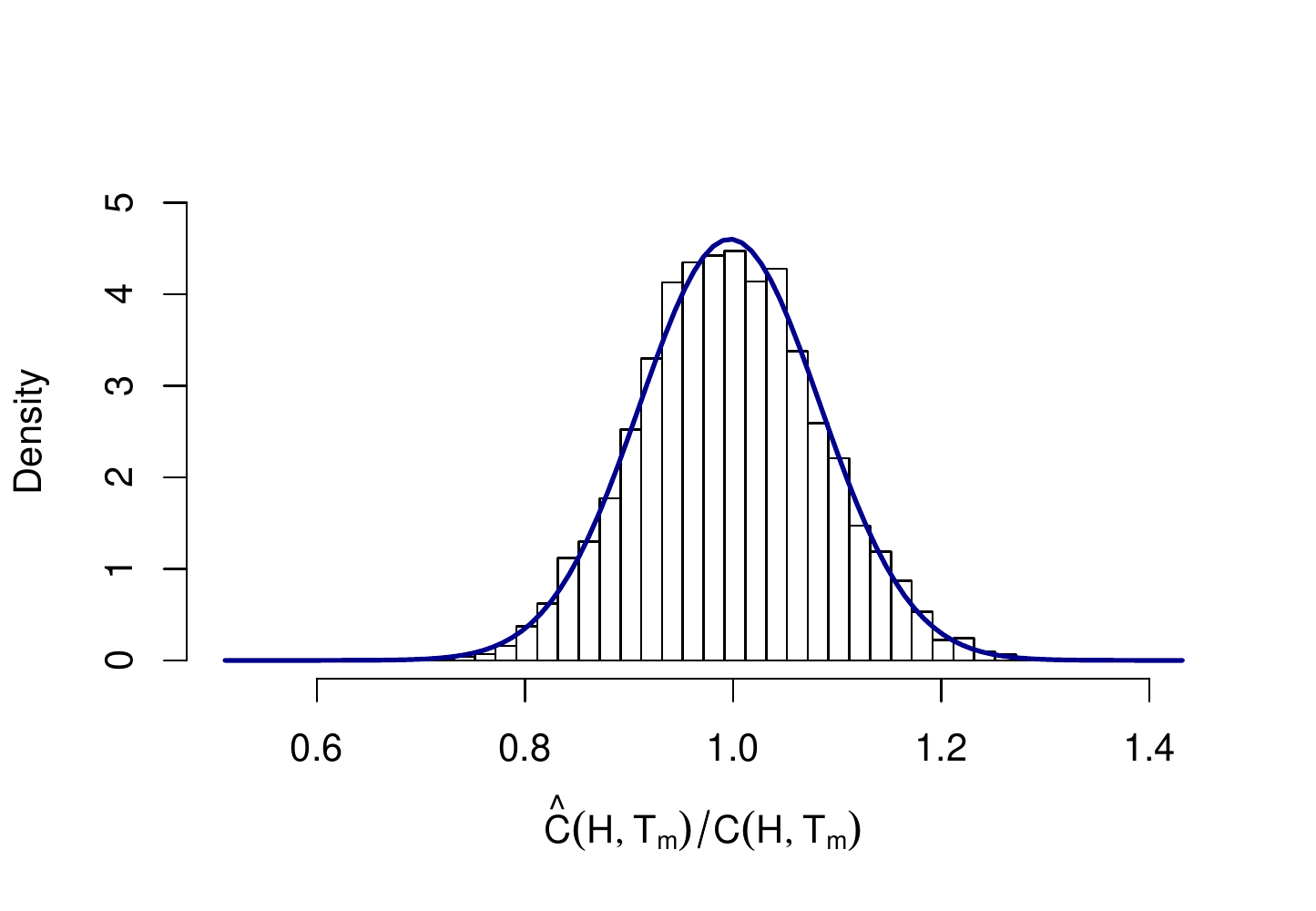}
\caption{Results for deterministic networks simulated from Poisson process stochastic block model. {\normalfont Left}: Empirical 1-standard deviation error bars for $\hat C(H, T_m)/ C(H, T_m)$ in estimating the number of temporal triangles (directed and cyclic, $\delta=2$) for each deterministic network generated from a homogeneous Poisson process stochastic block model with 2 blocks, equal block size, intensity driven by the individuals’ groups memberships, off-diagonal intensity of $0.06$,  and diagonal intensity varying between $0$ and $0.2$. Sampling ratio $p=0.03$, $100$ replications, $|V|=100, m=7000$. {\normalfont Right}: Histogram of $\hat C(H, T_m)/ C(H, T_m)$ in homogeneous Poisson process stochastic block model with 2 blocks, equal block size, off-diagonal intensity $0.06$, diagonal intensity $0.2$, sampling ratio $p=0.03$, $5000$ replications, $|V|=100, m=7000$, and the limiting normal density in blue.}
\label{fig:2_det_m2}
\end{figure}

\subsubsection*{Stochastic Case} 
We also perform experiments under the two random temporal graph models for stochastic case. 

We first simulate temporal networks from the homogeneous Poisson process model with uniformly chosen edges as described in Theorem \ref{the:clt-st}, and assess properties of the count estimator as observation time $\tau$ increases. Note that in the stochastic case, we generate a new stream of edges from the temporal graph model with rate $\lambda=30$ in observation time interval $[0, \tau]$ for each sampling and motif estimation trial, compared to the deterministic case where the stream of edges is fixed over all trial replications. Figure \ref{fig:3_sto_m1} (Left) shows the empirical means and standard deviations of $\hat C(H, T(\tau))/C(H, T(\tau))$ in estimating the number of cyclic triangles calculated over $100$ replications (i.e., $100$ realizations of temporal networks) as a function of observation time $\tau$ ranging from $200$ to $5000$. We see that as the observation time increases, the mean gets stabilized at $1$ and the size of the error bar decreases, illustrating the consistency of the count estimator as $\tau \rightarrow 0$. Meanwhile, Figure \ref{fig:3_sto_m1} (Right) shows a histogram of $\hat C(H, T_m)/C(H, T_m)$ over $5000$ replications for observation time $\tau=9000$, which is centered around $1$ and aligns well with the normal density curve shown in blue, illustrating the asymptotic normality of the motif count estimator under this stochastic model.

We also simulate temporal networks from the Poisson process stochastic block model, and assess properties of the count estimator as observation time $\tau$ increases. Multiple sequences of edges are simulated from the temporal graph model, one for each sampling and motif estimation trial. Figure \ref{fig:4_sto_m2} shows the results from the repeated trials, where the left plot illustrates the consistency of the count estimator as $\tau \rightarrow 0$, and the right plot shows the asymptotic normality of the motif count estimator under a particular parameter setting of this stochastic model.

 \begin{figure}[htbp]
\centering 
\includegraphics[scale=0.4]{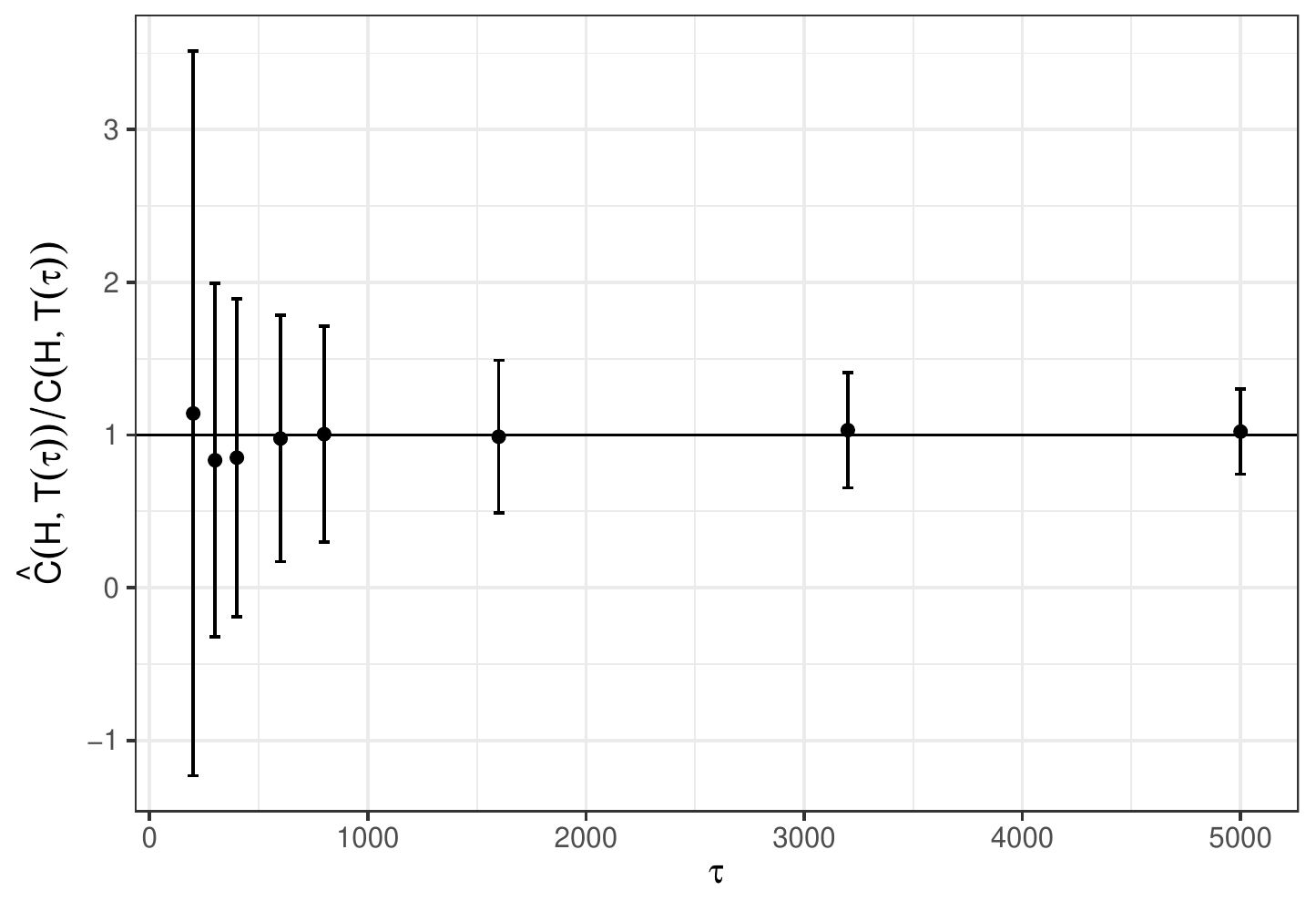}
\includegraphics[scale=0.4]{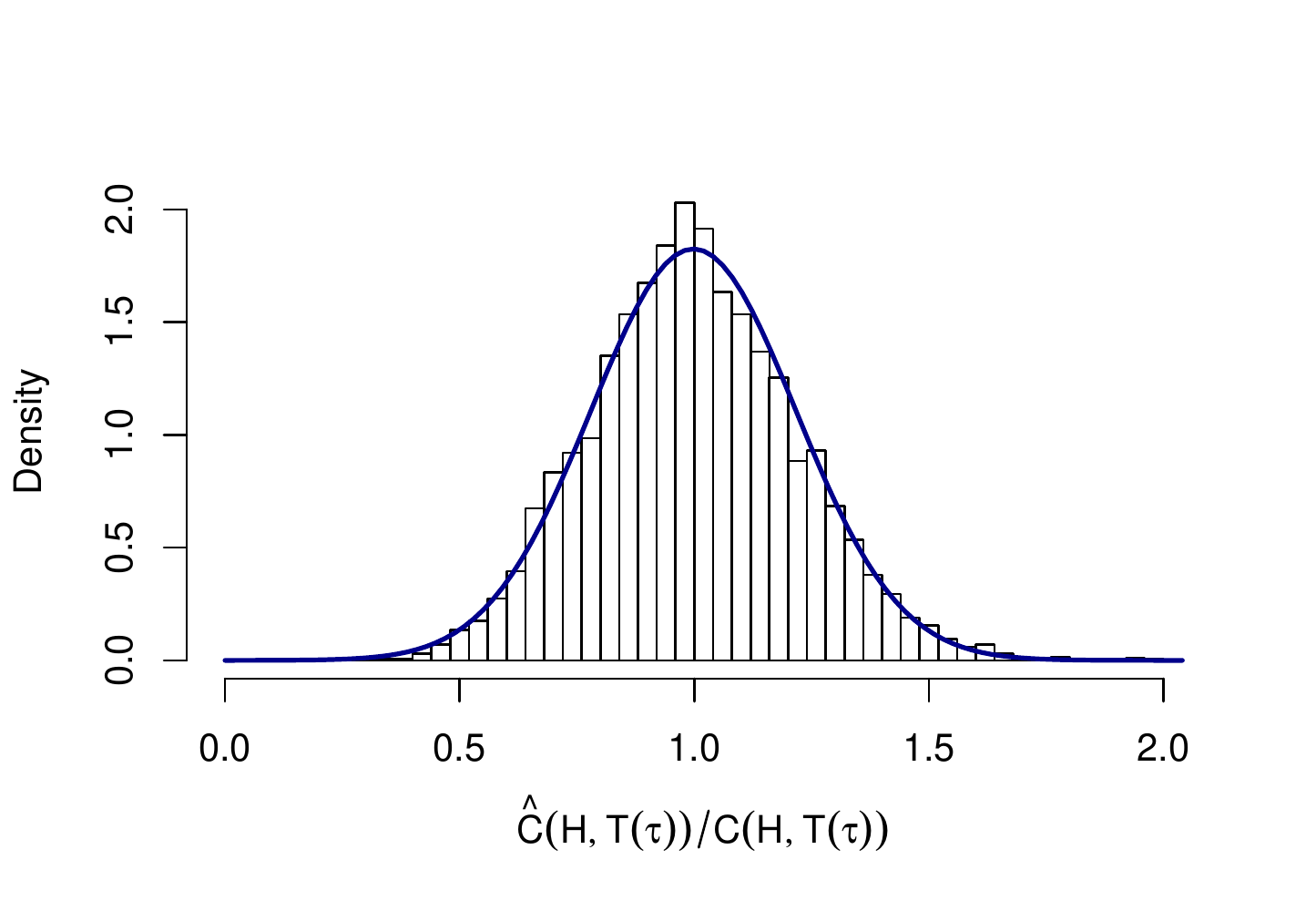}
\caption{Results for stochastic networks simulated from Poisson process uniform model. {\normalfont Left}: Empirical 1-standard deviation error bars for $\hat C(H, T(\tau))/ C(H, T(\tau))$ in estimating the number of temporal triangles (directed and cyclic, $\delta=2$) in a Poisson process uniform model with rate $\lambda=30$ in observation time interval $[0,\tau]$. $\tau$ is varying between $200$ and $5000$. sampling ratio $p=0.03$, $100$ replications, $|V|=100$. {\normalfont Right}: Histogram of $\hat C(H, T(\tau))/ C(H, T(\tau))$ in Poisson process uniform model with rate $\lambda= 30$, $|V|=100, \tau=9000$, sampling ratio $p=0.03$, $5000$ replications, and the limiting normal density in blue. }
\label{fig:3_sto_m1}
\end{figure}

 \begin{figure}[htbp]
\centering 
\includegraphics[scale=0.4]{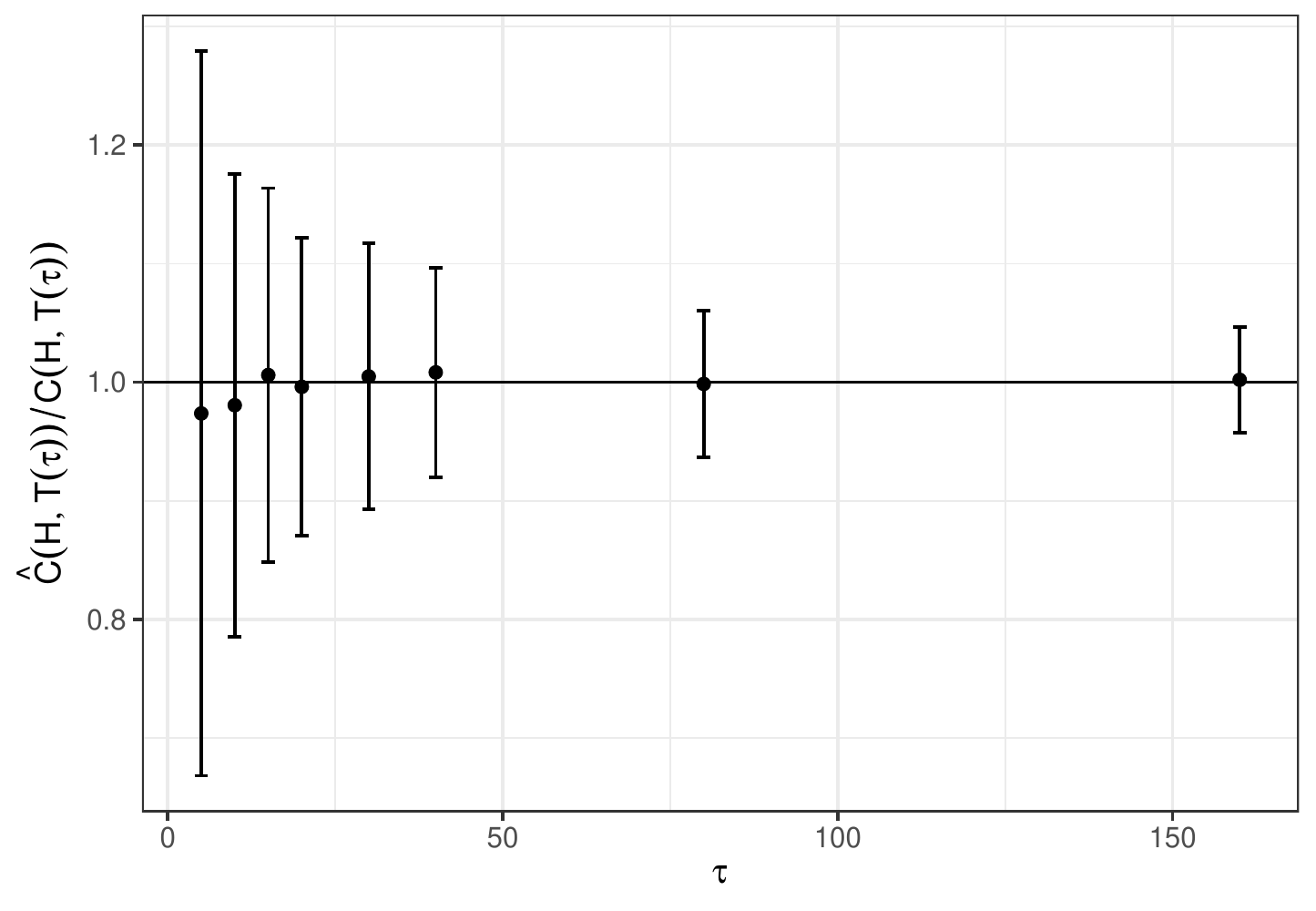}
\includegraphics[scale=0.4]{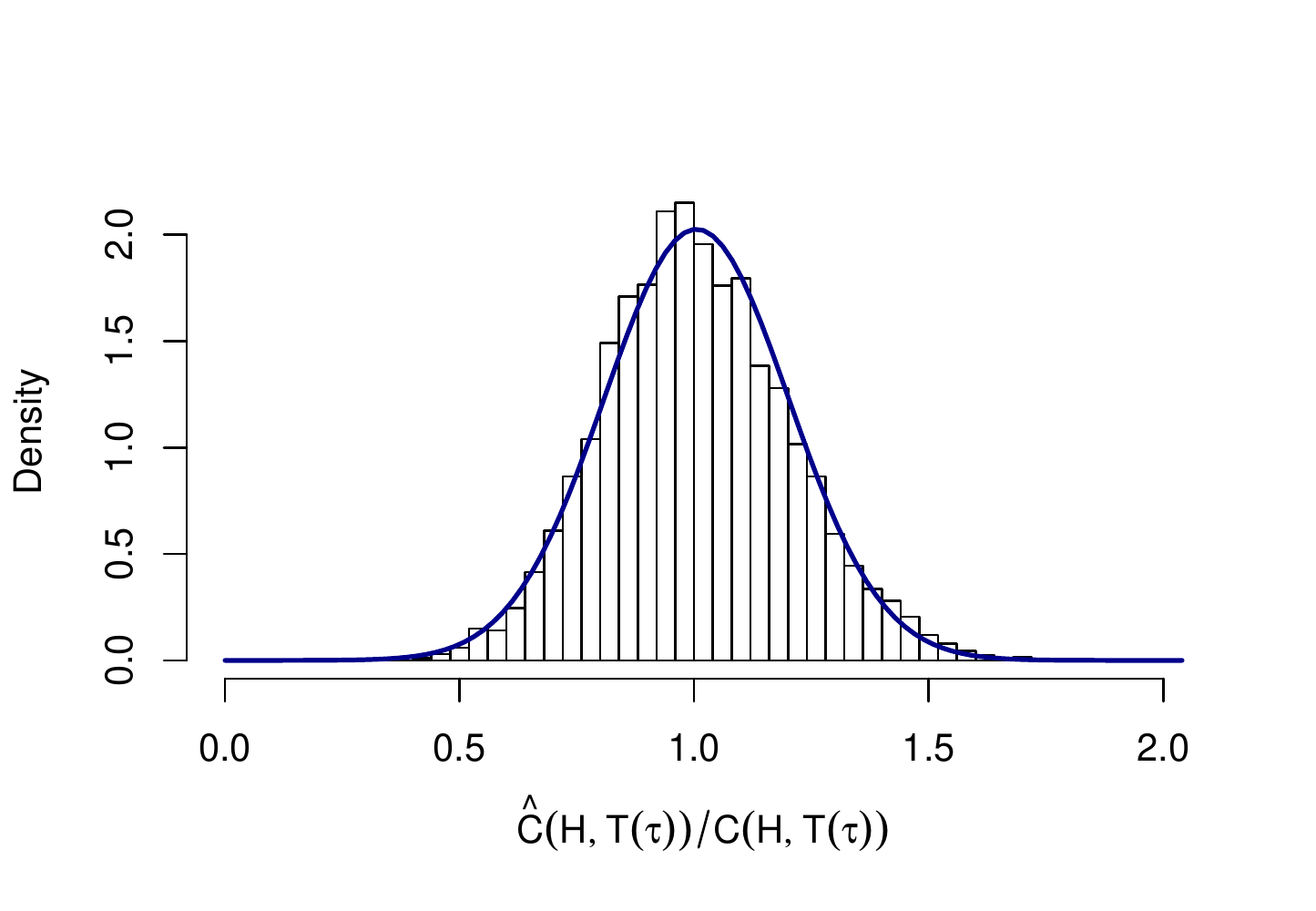}
\caption{Results for stochastic networks simulated from Poisson process stochastic block model. {\normalfont Left}: Empirical 1-standard deviation error bars for $\hat C(H, T_m)/ C(H, T_m)$ in estimating the number of temporal triangles (directed and cyclic, $\delta=2$) in a homogeneous Poisson process stochastic block model with 2 blocks, equal block size, intensity driven by the individuals’ groups memberships, off-diagonal intensity $0.06$, and diagonal intensity $0.02$ in observation time interval $[0,\tau]$. $\tau$ is varying between $5$ and $160$. Sampling ratio $p=0.03$, $100$ replications, $|V|=100$. {\normalfont Right}: Histogram of $\hat C(H, T(\tau))/ C(H, T(\tau))$ in Poisson process stochastic block model with off-diagonal intensity $0.06$, diagonal intensity $0.02$, $|V|=100, \tau=10$, sampling ratio $p=0.03$, $5000$ replications, and the limiting normal density in blue.}
\label{fig:4_sto_m2}
\end{figure}

\subsection{Application}
In this subsection, we perform experiments on real-world data to evaluate the coverage probability of the estimated $95\%$ confidence interval for the motif count under different sampling ratios for six types of motifs, as shown in Figure \ref{query_motifs}. In this experiments, we use messaging temporal network data \cite{panzarasa2009patterns}, which is comprised of private messages sent on an online social network at the University of California, Irvine. This data includes $59835$ interactions among $1899$ college students within a time span of 193 days. We set the time span $\delta$ for the motif to $86400$ seconds $= 1$ day. For each query motif, under one of the sampling ratios in $\{0.01, 0.03, 0.05, 0.1, 0.2\}$, we estimate the coverage probability of $95\%$ confidence intervals for motif count $C(H, T_m)$ using the relative frequency (RF) of the event that the confidence interval covers the true motif count in $5000$ sampling replications. Figure \ref{fig:app} shows the estimated coverage probability as a function of sampling ratio $p$ for the six query motifs. We see that the empirical coverage probabilities of CIs fall below the declared confidence level under low sampling ratios, and get closer as the sampling ratio increases. Also notice that the empirical coverage of the CIs estimated for motif $H_1$, $H_2$ and $H_3$ (shown as (a), (b), (c) in Figure \ref{query_motifs}) are better than that for motif $H_4$, $H_5$ and $H_6$ (shown as (d), (e), (f) in Figure \ref{query_motifs}). This is because the true counts for the former three motifs (which are $381720, 1201092, 295970$, respectively) are greater than that for the latter three (which are $9850, 16064, 271022$., respectively). This is a reflection of the role of the CLT condition for $\hat C(H, T_m)$ that requires frequent appearance for the target motif.

\begin{figure}[htbp]
\centering 
\includegraphics[scale=0.3]{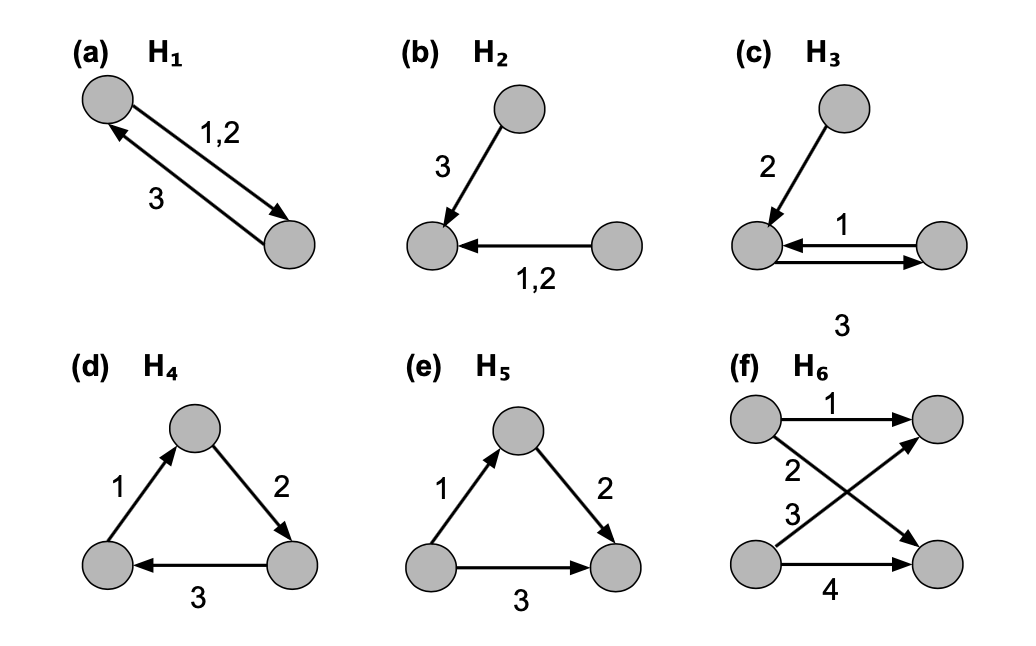}
\caption{Six query motifs. The numbers on edges represent edge ordering.}
\label{query_motifs}
\end{figure}

\begin{figure}[htbp]
\centering 
\includegraphics[scale=0.45]{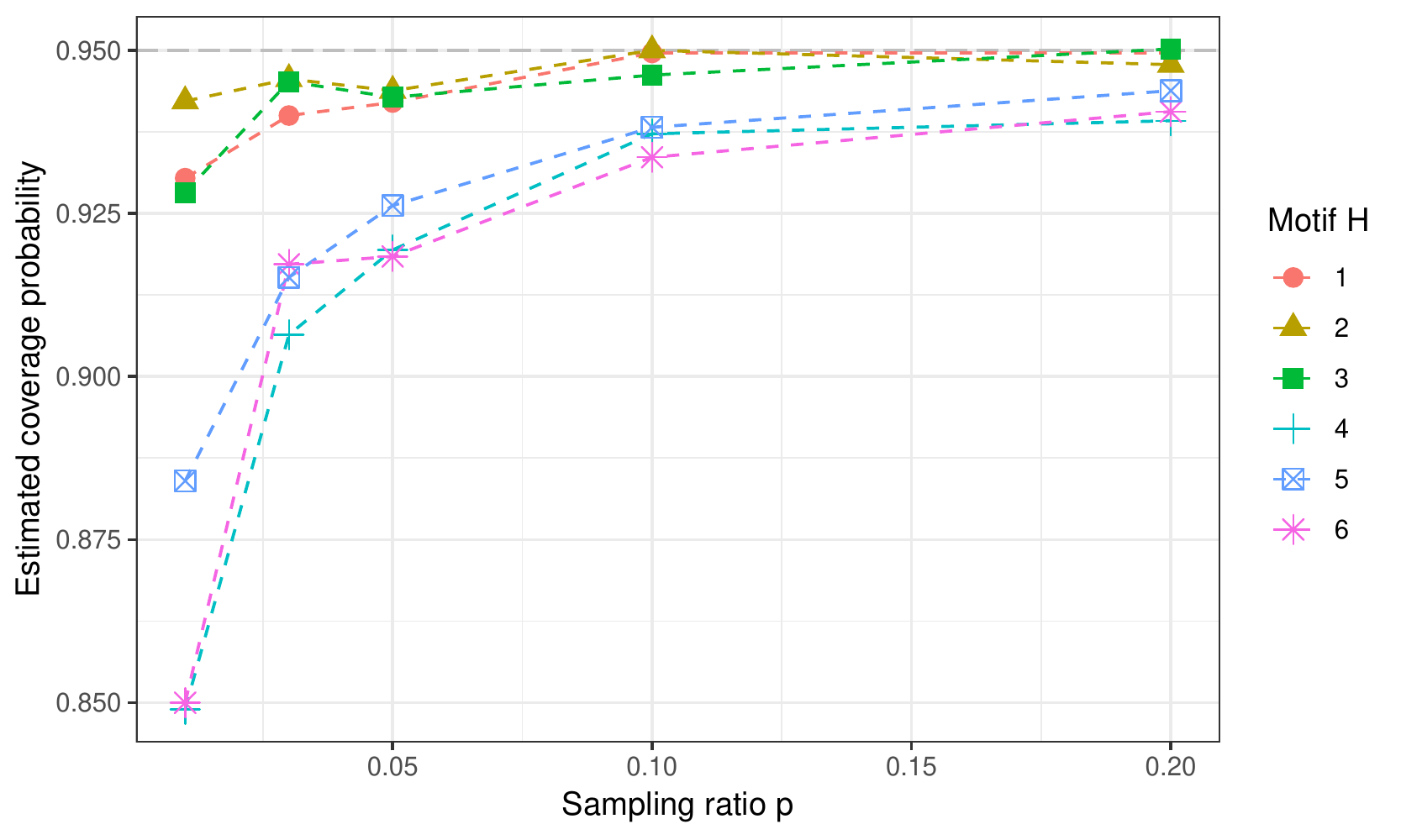}
\caption{Estimated coverage probability of $95\%$ confidence intervals for motif count $C(H, T_m)$ from $5000$ sampling replications for college messaging temporal network. Reported in the plots are the relative frequencies (RF) of the event that a confidence interval covers the
corresponding true motif count for one of the motifs in Figure \ref{query_motifs} under sampling ratio $p\in \{0.01, 0.03, 0.05, 0.1, 0.2\}$.}
\label{fig:app}
\end{figure}

\section{Discussion}
\label{sec:discussion}

In this work, we consider the problem of temporal motif estimation in graph streams under an edge sampling model. We establish conditions under which the count estimator $\hat  C(H, T_m)$ is consistent and asymptotically normal as $m \rightarrow \infty$, for deterministic $T_m$, and construct asymptotically valid confidence intervals for the motif count $C(H, T_m)$. We also theoretically derive the conditions w.r.t. a simple counting process model of temporal networks for consistency and asymptotic normality of the count estimator. Simulation studies are conducted to illustrate numerically the consistency and asymptotic normality of the motif estimator under both deterministic and stochastic cases. We also evaluate the true coverage probability of the confidence interval using real data for a variety of motifs.

There are two interesting directions for future work. One is to study the asymptotics of the temporal motif estimator under other sampling models, e.g., the subwindow sampling regime in \cite{liu2018sampling} and \cite{sarpe2021presto}, where smaller time intervals are sampled from the graph streams, and exact motif counts in the sampled time intervals are used to estimate the total counts. The other direction is to study the asymptotics under more complicated settings of graph stream models, e.g., inhomogeneous counting processes, and correlated multivariate point process for non-independent occurrences of interaction events \cite{perry2013point}. These adaptions are relevant to the characteristic of ``burstiness'' often seen in empirical dynamic network data sets \cite{min2013burstiness}.

\appendix
\section{Proof of Theorem \ref{the:con-det}} \label{appendix:con-det}
By Chebyshev's inequality, for every $\epsilon > 0$, we have 
\begin{equation}
\begin{split}
P\left(\big | \frac{\hat  C(H, T_m)}{C(H, T_m)} - 1\big| > \epsilon \right) & = P\left (|\hat  C(H, T_m)- C(H, T_m)|  > \epsilon \cdot C(H, T_m) \right) \\
& \leq \frac{var(\hat  C(H, T_m))}{\epsilon^2  C^2(H, T_m)} =  \frac{1-p}{\epsilon^2 p l^2}  \sum_{i = 1}^m \eta^2(e_i)\cdot \frac{l^2}{(\sum_{i = 1}^m \eta(e_i))^2} \\
& =  \frac{1-p}{\epsilon^2 p } \frac{\sum_{i = 1}^m \eta^2(e_i)}{(\sum_{i = 1}^m \eta(e_i))^2}  \rightarrow 0, \text{ as }m \rightarrow \infty.
\end{split}
\end{equation}

\section{Proof of Theorem \ref{the:clt-det}} \label{appendix:clt-det}

To derive the asymptotic normality of the estimator we consider the rescaled statistic 
\begin{equation}
Z(H, T_m):= \frac{\hat C(H, T_m) - C(H, T_m)}{\sqrt{var[\hat C(H, T_m)]}} .
\end{equation}

\begin{proof}
Assume $\sigma^2(H, T_m) := var(\hat C(H, T_m)) =  \frac{1-p}{p l^2}  \sum_{i = 1}^m \eta^2(e_i)>0$. To simplify the notation, we drop the dependency on $H$ and $T_m$ from $\sigma(H, T_m)$, $Z(H, T_m)$ and denote them by $\sigma$ and $Z$, respectively when no confusion is possible. Define 
\begin{equation}
Y_i := \frac{\eta(e_i)}{pl}(\omega_i -p) 
\end{equation}
for $i=1,2,\cdots,m$. 
Then 
\begin{equation}
\begin{split}
Z &= \frac{1}{\sigma}(\hat C(H, T_m) - C(H, T_m))  = \frac{1}{\sigma} 
\sum_{i=1}^m (
\frac{1}{pl} \omega_i\eta(e_i) -\frac{1}{l} \eta(e_i))  \\
& = \frac{1}{\sigma} \sum_{i=1}^m Y_i.
\end{split}
\end{equation}
Note that $Y_i$'s are independent of each other with $\mu_i := E[Y_i] = 0$, $\sigma_i^2 := var[Y_i] =\frac{1-p}{pl^2} \eta^2(e_i)$. Let $S_m := \sum_{i=1}^m Y_i$, $s_m^2 := \sum_{i=1}^m \sigma_i^2 =  \frac{1-p}{pl^2} \sum_{i=1}^m \eta^2(e_i) = \sigma^2 $, then $Z = S_m/s_m $. Using Berry-Esseen's Theorem (Theorem 6.2 in Chapter 7 of \cite{gut2005probability}), we have  
\begin{equation}
\begin{split}
    \sup_{x\in \mathbb{R}} |F_Z(x) - \Phi(x)| \leq C \cdot \frac{\sum_{i=1}^m E\big \vert Y_i\big\vert^3}{ (\sum_{i=1}^m \sigma_i^2)^{3/2}} \lesssim \frac{1}{\sigma^3} \sum_{i=1}^m E\big \vert Y_i\big\vert^3,
\end{split}
\end{equation}
where $C$ is a constant, $F_Z(x):=P\left(Z\leq x\right)$, and $ \Phi(x) := P\left(N(0,1)\leq x\right)$.

Note that 
\begin{equation}
\begin{split}
|Y_i| = \begin{cases}
\frac{(1-p)\eta(e_i)}{pl}\text{ with prob. } p 
 \\
\frac{\eta(e_i)}{l} \text{ with prob. }1-p,
\end{cases}
\end{split}
\end{equation}
then $E|Y_i|^3 = \frac{(1-p)((1-p)^2 + p^2)}{p^2l^3} \eta^3(e_i)$. Thus, 
\begin{equation}
\begin{split}
\frac{1}{\sigma^3} \sum_{i=1}^m E\big \vert Y_i\big\vert^3 
&= \frac{ \frac{(1-p)((1-p)^2 + p^2)}{p^2l^3}}{ (\frac{1-p}{pl^2})^{3/2}} \cdot
 \frac{\sum_{i=1}^m\eta^3(e_i)} {(\sum_{i=1}^m\eta^2(e_i))^{3/2}} \\
 & = \frac{(1-p)^2 + p^2}{\sqrt{p(1-p)}}
 \cdot
 \frac{\sum_{i=1}^m\eta^3(e_i)} {(\sum_{i=1}^m\eta^2(e_i))^{3/2}}
 \\ 
 &\lesssim  \frac{\sum_{i=1}^m\eta^3(e_i)} {(\sum_{i=1}^m\eta^2(e_i))^{3/2}}.
 \end{split}
\end{equation}
Therefore, if $\frac{\sum_{i=1}^m\eta^3(e_i)} {(\sum_{i=1}^m\eta^2(e_i))^{3/2}} \rightarrow 0$, as $m \rightarrow \infty$, we have 
\begin{equation}
\begin{split}
    \sup_{x\in \mathbb{R}} |F_Z(x) - \Phi(x)| \lesssim \frac{1}{\sigma^3} \sum_{i=1}^m E\big \vert Y_i\big\vert^3 \rightarrow 0, \text{ as } m \rightarrow \infty, 
\end{split}
\end{equation}
thus $Z \xrightarrow{D} N(0,1)$, as $m \rightarrow \infty$. 
\end{proof}

\section{Proof of Lemma \ref{lem-1}}\label{appendix_proof_lemma}
Under Assumption \ref{3} that $\eta(e_i) \leq C_2 < \infty$ for sufficiently large $i$, where $C_2$ is a constant, and $C(H, T_m) \gg \sqrt{m}$, we can show as follows that assumption \eqref{1} holds,
\begin{equation}
 \frac{\sum_{i = 1}^m \eta^2(e_i)}{(\sum_{i = 1}^m \eta(e_i))^2} = 
\frac{\sum_{i = 1}^m \eta^2(e_i)}{l^2 C^2(H, T_m)} 
\lesssim \frac{m}{C^2(H, T_m)} \cdot \frac{C_2^2}{l^2} \rightarrow 0 \text{ as }   m \rightarrow \infty.
\end{equation}
Hence, Assumption \ref{3} implies assumption \eqref{1}. The rest of Lemma \ref{lem-1} can be proved similarly. 

\section{Proof of Proposition \ref{prop:ci}} \label{appendix_proof_prop}
Note that 
\begin{equation}
    E[ \hat \sigma^2(H, T_m)] = \frac{1-p}{p^2_m l^2}  \sum_{i = 1}^m E[\omega_i] \eta^2(e_i) =  \sigma^2(H, T_m),
\end{equation}
\begin{equation}
\begin{split}
        var[ \hat \sigma^2(H, T_m)/\sigma^2(H, T_m)] &= \frac{1}{p^2_m(\sum_{i = 1}^m  \eta^2(e_i))^2 }  \sum_{i = 1}^m var[\omega_i] \eta^4(e_i)  \\
        & = \frac{1-p}{p} \frac{ \sum_{i = 1}^m \eta^4(e_i)}{(\sum_{i = 1}^m  \eta^2(e_i))^2}.
\end{split}
\end{equation}
Then by Chebyshev's inequality, under Assumption \ref{4}, for every $\epsilon > 0$
\begin{equation}
\begin{split}
P\left(\big | \frac{\hat  \sigma(H, T_m)}{\sigma(H, T_m)} - 1\big| > \epsilon \right) 
& \leq \frac{var[ \hat \sigma^2(H, T_m)/\sigma^2(H, T_m)]}{\epsilon^2}  \\
& = \frac{1-p}{p\epsilon^2} \frac{ \sum_{i = 1}^m \eta^4(e_i)}{(\sum_{i = 1}^m  \eta^2(e_i))^2} \\
&\lesssim
\frac{1-p}{p\epsilon^2} \frac{m}{C^2(H, T_m)}\rightarrow  0, 
 \text{ as }m \rightarrow \infty
\end{split}
\end{equation}
Hence $\frac{\hat \sigma^2(H, T_m)}{ \sigma^2(H, T_m)} \rightarrow_P 1 $. \eqref{equ:CI} is an immediate consequence of this consistency and $ Z(H, T_m) \rightarrow_D N(0, 1) $, i.e. 
\begin{equation}
    \frac{\hat C(H, T_m) - C(H, T_m)}{\hat \sigma(H, T_m)} = Z(H, T_m) \cdot \frac{\hat \sigma^2(H, T_m)}{ \sigma^2(H, T_m)} \rightarrow_D N(0, 1).
\end{equation}

\section{Proof of Theorem \ref{the:clt-st}}\label{appendix:proof_clt-st} 
Note that $T(\tau)$ is a marked Poisson point process, where arrival times $t_i$ follow a Poisson process and each is marked by an edge $e_i$. Also note that the random variables $\eta(e_i)$ are defined as functions of this process over windows centered at each $t_i$, and these windows may be overlapping. Thus, in general, the $\eta(e_i)$'s are dependent and $ \hat C(H, T(\tau)) = \frac{1}{p l } \sum_{i = 1}^{N(\tau)} \omega_i\eta(e_i)$ is therefore not a compound Poisson process. The exception is the case
when $l=1$, where we have $\eta(e_i) = 1$, so that the count estimator becomes $\hat C(H, T(\tau)) = \frac{1}{p l } \sum_{i = 1}^{N(\tau)} \omega_i$. Consistency and the stated CLT hold in this case using standard results for compound Poisson processes, since the $\omega_i$ are i.i.d., independent of $N(\tau)$.  Our goal now is to show that the stated results continue to hold for $l\geq 2$, under the two conditions of the theorem w.r.t. the parameters $\lambda, \tau$, and the total number of vertices $|V|$ in the stochastic model of $T(\tau)$.

To prove consistency for $\hat  C(H, T(\tau))$, we need to show that for every  $\epsilon>0$, 
\begin{equation}
    P(| \frac{\hat  C(H, T(\tau))}{ C(H, T(\tau))} - 1|>\epsilon) \rightarrow 0, \text{ as } \tau \rightarrow \infty.
\end{equation}

Let $\Omega$ denote the sample space on which $T(\tau)$ is defined, and $\mathcal{F}$ the $\sigma$-algebra on $\Omega$. 
For any $\epsilon_0>0$ and $C_0>0$, define $A \in \mathcal{F}$:
\begin{equation}
    A := \{T(\tau) \in \Omega:\frac{N(\tau)^{2/3}}{C(H,T(\tau))}<\epsilon_0, \text{ and } \eta(e_i)\leq C_0 \text{ for } i=1,\cdots, N(\tau) \}.
\label{equ:A_set}
\end{equation}
Then we have a partition: $\Omega = A \cup A^c$, where $A^c$ is the complement of set $A$. 

For every $\epsilon>0$,
\begin{equation}
\begin{split}
        P\left(| \frac{\hat  C(H, T(\tau))}{ C(H, T(\tau))} - 1|>\epsilon \right)
        & =   E \left[ \bm {I}_{\{| \frac{\hat  C(H, T(\tau))}{ C(H, T(\tau))} - 1|>\epsilon\}}\right] 
     = S_1 + S_2,
\end{split}
\end{equation}
where based on the law of total expectation,
\begin{equation}
\begin{split}
        S_1 :& =  E \left[ \bm {I}_{\{| \frac{\hat  C(H, T(\tau))}{ C(H, T(\tau))} - 1|>\epsilon\}} \,| \, A \right]   P(A) \\
        &=  E\left[   E\left[ \bm {I}_{\{| \frac{\hat  C(H, T(\tau))}{ C(H, T(\tau))} - 1|>\epsilon\}} \,| \, A, T(\tau)=T_m \right] \right] P(A),\\
    S_2 : &=   E \left[ \bm {I}_{\{| \frac{\hat  C(H, T(\tau))}{ C(H, T(\tau))} - 1|>\epsilon\}} \,| \, A^c \right]   P(A^c),
\end{split}
\end{equation}
For the first expectation $S_1$, the inner expectation is taken with respect to the sampling distribution of $\hat  C(H, T(\tau))$ conditional on $T(\tau)$ being a deterministic $T_m \in A$. The outer expectation is taken with respect to the distribution of $T(\tau)$ conditional on $T(\tau)\in A$. Hence, conditional on any deterministic $T_m \in A$, from the consistency theorem for deterministic graphs $T_m$, it follows that,
\begin{equation}
\begin{split}
E\left[ \bm {I}_{\{| \frac{\hat  C(H, T(\tau))}{ C(H, T(\tau))} - 1|>\epsilon\}} \,| \, A, T(\tau)=T_m \right] & = 
P \left( | \frac{\hat  C(H, T(\tau))}{ C(H, T(\tau))} - 1| > \epsilon   \,|\,  A, T(\tau)  = T_m \right)  \\
&\leq  \frac{1-p}{\epsilon^2 p } \frac{\sum_{i = 1}^m \eta^2(e_i)}{(\sum_{i = 1}^m \eta(e_i))^2}   \\
     & \leq   \frac{(1-p)C_0^2}{\epsilon^2l^2 p }\cdot   (\frac{m^{1/2}}{C(H, T_m)})^2 \\
     & \leq \frac{(1-p)C_0^2}{\epsilon^2l^2 p }\cdot   (\frac{m^{2/3}}{C(H, T_m)})^2\\
     & < \frac{(1-p)C_0^2}{\epsilon^2l^2 p }\cdot \epsilon_0^2.
\end{split}
\end{equation}
Hence, 
\begin{equation}
\begin{split}
   S_1  \leq  E\left[\frac{(1-p)C_0^2}{\epsilon^2l^2 p }\cdot \epsilon_0^2 \right ]  P(A)  \leq  \frac{(1-p)C_0^2}{\epsilon^2l^2 p }\cdot \epsilon_0^2.
\label{equ:s1}
\end{split}
\end{equation}
Then for every $\epsilon$, fixed sampling ratio $p$, the sum term $S_1$ can be made arbitrarily small by making $\epsilon_0$ small in the construction of set $A$. That is, if we set $\epsilon_0 \rightarrow 0$, as $\tau\rightarrow \infty$ in \eqref{equ:A_set}, then for every $\epsilon >0$, 
\begin{equation}
    S_1 \rightarrow 0, \text{ as } \tau\rightarrow \infty.
\end{equation}

For the second expectation $S_2$, since $ E \left[ \bm {I}_{\{| \frac{\hat  C(H, T(\tau))}{ C(H, T(\tau))} - 1|>\epsilon\}} \,| \, A^c \right] \leq 1$, then 
\begin{equation}
\begin{split}
      S_2 \leq  P(A^c) = 1-P\left(A\right)
      \rightarrow 0, \text{ as } \tau \rightarrow \infty,
\end{split}
\end{equation}
if $P\left(A\right) \rightarrow 1$, as $\tau \rightarrow \infty$. Therefore, in order to show consistency in the stochastic case for $T(\tau)$, it suffices to show 
\begin{equation}
   P\left(A\right) \rightarrow 1 , \text{ as } \tau \rightarrow \infty,
\label{equ:assumption_to_prove}
\end{equation}
where in the construction of set $A$, $\epsilon_0 \rightarrow 0$, as $\tau\rightarrow \infty$. The proof is provided in the supplemental materials. 

We now show that the CLT also follows if \eqref{equ:assumption_to_prove} holds.
To prove a CLT for $\hat  C(H, T(\tau))$, we need show that for every $x\in \mathbb{R}$,
\begin{equation}
    |P\left(Z(H, T(\tau))\leq x\right)-\Phi(x) | \rightarrow 0, \text{ as } \tau \rightarrow \infty,
\end{equation}
where $ \Phi(x) := P\left(N(0,1)\leq x\right)$. To simplify the notation, we drop the dependency on $H$ and $T(\tau)$ from $Z(H, T(\tau))$, and denote it by $Z(\tau)$, when no confusion is possible.

From the law of total expectation, we have
\begin{equation}
\begin{split}
     P\left(Z(\tau)\leq x\right) 
      &= 
   E \left[ \bm {I}_{\{Z(\tau)\leq x\}}\right]  \\
   &= E \left[ \bm {I}_{\{Z(\tau)\leq x\}}\,\Big |\, A\right] P(A) + E \left[  \bm {I}_{\{Z(\tau)\leq x\}}
   \,\Big |\, A^c\right] P(A^c),
\end{split}
\end{equation}
and hence for each fixed $x$,  
\begin{equation}
\begin{split}
      \big|
     P\left(Z(\tau)\leq x\right)-\Phi(x) 
     \big| 
      &= 
     \big |  P\left(Z(\tau)\leq x\right) - \Phi(x)P(A) - \Phi(x) (1-P(A))
     \big| \\
     & \leq S_3 + S_4,
\end{split}
\end{equation}
where
\begin{equation}
    \begin{split}
        S_3 &:=  
        \big|
        E \left[ \bm {I}_{\{Z(\tau)\leq x\}}\,\Big |\, A\right]  - \Phi(x)\big| \cdot P(A) \\
      &  =\big|  E\left[E \left[ \bm {I}_{\{Z(\tau)\leq x\}} 
     \,\Big |\, A, T(\tau)=T_m\right] \right]-
      \Phi(x)\big| \cdot P(A)  \\
      &  =\big|  E\left[E \left[ \bm {I}_{\{Z(\tau)\leq x\}} 
     \,\Big |\, A, T(\tau)=T_m\right] -
      \Phi(x)\right]\big| \cdot P(A)  \\
      &  \leq  E\left[\big|  E \left[ \bm {I}_{\{Z(\tau)\leq x\}} 
     \,\Big |\, A, T(\tau)=T_m\right] -
      \Phi(x)\big|\right] \cdot P(A),  \\
    S_4 &:=  \big|
        E \left[ \bm {I}_{\{Z(\tau)\leq x\}}\,\Big |\, A^c\right]  - \Phi(x)\big| \cdot P(A^c)  .
    \end{split}
\end{equation}
Conditional on $T(\tau)$ being a deterministic $T_m \in A$, from the CLT theorem for deterministic graphs $T_m$, it follows that
\begin{equation}
\begin{split}
 \big | E \left[ \bm {I}_{\{Z(\tau)\leq x\}} \,\Big |\, A, T(\tau)=T_m\right] -\Phi(x) \big|
 &=
   \big |P \left( Z(\tau)\leq x \,|\,   A, T(\tau) = T_m \right)  - \Phi(x)\big|
   \\& \leq  \frac{(1-p)^2 + p^2}{\sqrt{p(1-p)}}
 \cdot
 \frac{\sum_{i=1}^m\eta^3(e_i)} {(\sum_{i=1}^m\eta^2(e_i))^{3/2}}  \\
     & \leq   
      \frac{C_0^3((1-p)^2 + p^2)}{l^{3/2}\sqrt{p(1-p)}} \cdot   (\frac{m^{2/3}}{C(H, T_m)})^{3/2} \\
     & <  \frac{C_0^3((1-p)^2 + p^2)}{l^{3/2}\sqrt{p(1-p)}}  \cdot \epsilon_0^{3/2}.
\end{split}
\end{equation}
Similarly as in \eqref{equ:s1}, for fixed sampling ratio $p$, the term $S_3$ can be made arbitrarily small by making $\epsilon_0$ small. That is, if we set $\epsilon_0 \rightarrow 0$, as $\tau\rightarrow \infty$ in \eqref{equ:A_set}, then 
\begin{equation}
    S_3 \rightarrow 0, \text{ as } \tau\rightarrow \infty.
\end{equation}
For the term $S_4$, since $ \big|
        E \left[ \bm {I}_{\{Z(\tau)\leq x\}}\,\Big |\, A^c\right]  - \Phi(x)\big|\leq 2$, then 
\begin{equation}
\begin{split}
      S_4 \leq   2 \cdot  (1-P\left(A\right))
      \rightarrow 0, \text{ as } \tau \rightarrow \infty,
\end{split}
\end{equation}
if $P\left(A\right) \rightarrow 1$, as $\tau \rightarrow \infty$. Therefore, in order to show CLT in the stochastic case for $T(\tau)$, it again suffices to show \eqref{equ:assumption_to_prove} holds. The proof of \eqref{equ:assumption_to_prove} is provided in the supplemental materials..

\section*{Acknowledgments}
This work was supported in part by NSF award SES-2120115.
\bibliographystyle{siamplain}
\bibliography{ms}
\end{document}


\maketitle

\section{Proof of (E.9)}\label{sec:proof_assumption}

Write 
\begin{equation}
     A = A_1 \cap A_2,
\end{equation}
where
\begin{equation}
\begin{split}
   A_1 &:= \{T(\tau) \in \Omega:\eta(e_i)\leq C_0 \text{ for } i=1,\cdots, N(\tau) \}, \\
   A_2 & :=  \{T(\tau) \in \Omega:\frac{N(\tau)^{2/3}}{C(H,T(\tau))}<\epsilon_0\}.
\end{split}
\end{equation}
To show 
(E.9) holds, it suffices to show 
\begin{equation}
\begin{split}
      &P(T(\tau) \in  A_1) \rightarrow 1 , \text{ as } \tau \rightarrow \infty, \\
      &P(T(\tau) \in  A_2) \rightarrow 1 , \text{ as } \tau \rightarrow \infty.
\end{split}
\end{equation}

These tasks boil down to showing the following under the two
conditions of the theorem:
\begin{enumerate}
    \item 
    for each $i=1,\cdots, N(\tau)$, the random local counts $\eta(e_i) = O_p(1)$ as $\tau \rightarrow \infty$, (i.e., $\eta(e_i) \leq C_0 < \infty$ with high probability for $\tau$ large enough, where $C_0$ is a constant),
    \item 
\begin{equation}
 \frac{N(\tau)^{2/3}}{C(H,T(\tau))}   \xrightarrow{p} 0,  \text{ as } \tau \rightarrow \infty.
\end{equation}    
    
\end{enumerate}

We first consider the random local count $\eta(e_i)$, for $i =1,...,N(\tau)$, which is the number of instances of an $l$-edge $\delta$-temporal motif $H$ in $T(\tau)$ containing an edge $e_i$. Let $\kappa$ be the number of temporal edges that occur within a time interval of length $2\delta$ centered at the (random) time $t_i$ when $e_i$ occurs. Then there are at most $\kappa \choose l$ edge subsequences of length $l$ occurring within $\delta$ of $e_i$ that could possibly form an instance of motif $H$. Thus, $\eta(e_i) \leq {\kappa \choose l}$.  
Since we assume that the underlying Poisson process is stationary, it follows that for each $\eta(e_i)$, $\kappa$ is zero-truncated Poisson distributed (reflecting that $e_i$ is known to occur within this $2\delta$ interval) with mean $\frac{2\lambda\delta}{1-\exp(-2\lambda\delta)}$.

\noindent\textit{Remark}: 
Let $X \sim ZTPoisson(\lambda)$, and $Y \sim Poisson(\lambda)$, then for $k=1,2,\cdots$, 
$    P(X=k) = \frac{1}{1-\exp(-\lambda)}P(Y=k)
$, and $E(X^k) = \frac{1}{1-\exp(-\lambda)} E(Y^k)$.

We will use Markov's inequality to prove our result.  Note that $\eta(e_i) \leq { \kappa \choose l} \leq (\frac{e}{l})^l \kappa^l$. Using a bound for moments of the Poisson distribution (Theorem 1 in \cite{ahle2021sharp}), we have 
\begin{equation}
\begin{split}
    E \left[\eta(e_i) \right] 
    &\leq  (\frac{e}{l})^l  \cdot E \left[\kappa^l \right] \\
    & \leq 
    (\frac{e}{l})^l  \cdot 
    \frac{1}{1-\exp(-2\lambda\delta)} \cdot
    \left(\frac{l}
    {\log(l/(2\lambda\delta) + 1)} \right)^l  <\infty \text{ if } \lambda < \infty.
\end{split}
\end{equation}
Denote the upper bound on the right-hand side above by $A(\lambda)$. Hence, the expected local count $E \left[\eta(e_i) \right]$ is bounded above by $A(\lambda)$ for each edge $e_i$ occurring in $(0, \tau]$, when $\lambda < \infty$. By analyzing this function, we know that $A(\lambda)$ is increasing w.r.t. $\lambda$, $A(\lambda)$ is finite whenever $\lambda \neq \infty$, and $\lim_{\lambda\rightarrow 0}A(\lambda)=0$. 

By Markov's inequality, $\forall t > 0$, we have
\begin{equation}
    P(\eta(e_i) \geq t) \leq \frac{E \left[\eta(e_i) \right] }{ t}.
\end{equation}
Under condition 1 of the Theorem 
3.4, $\lambda$ is bounded above, say by $\lambda_0\in (0, \infty)$, as $\tau \rightarrow \infty$, so $A(\lambda)$ is also bounded above by $A(\lambda_0)$ for sufficiently large $\tau$. Hence, for every $\epsilon > 0$, there exists constant $C_\epsilon =A(\lambda_0)/\epsilon$ and $\tau_0$ such that for all $\tau > \tau_0$,
\begin{equation}\label{local_2}
    P(\eta(e_i) \geq C_\epsilon) \leq \frac{E \left[\eta(e_i) \right] }{ A(\lambda_0)/\epsilon} \leq \frac{A(\lambda) }{ A(\lambda_0)/\epsilon} \leq \epsilon.
\end{equation}
Thus, $\eta(e_i) = O_p(1)$, as $\tau \rightarrow \infty $. 

We now show that 
\begin{equation}
 \frac{N(\tau)^{2/3}}{C(H,T(\tau))}   \xrightarrow{p} 0,  \text{ as } \tau \rightarrow \infty.
\end{equation}    
Noting that 
\begin{equation}
   \frac{N(\tau)^{2/3}}{C(H,T(\tau))}  = 
   \frac{N(\tau)^{2/3}}{E\left[N(\tau)^{2/3} \right]}  
   \cdot 
   \frac{E\left[N(\tau)^{2/3} \right]}{E\left[C(H,T(\tau))\right]} 
   \cdot 
   \frac{E\left[C(H,T(\tau))\right]}{C(H,T(\tau))},
\end{equation}
our proof proceeds by establishing the following three results: (1) $\frac{N(\tau)^{2/3}}{E\left[N(\tau)^{2/3} \right]}  \xrightarrow{p} 1$, as $\tau \rightarrow \infty$ (Lemma \ref{lemma_proof_1}), (2)   $\frac{E\left[N(\tau)^{2/3}\right]}{E\left[C(H,T(\tau))\right]} \rightarrow 0$, as $\tau \rightarrow \infty$ (Lemma \ref{lemma_proof_2}), and (3) $\frac{E\left[C(H,T(\tau))\right]}{C(H,T(\tau))} = O_p(1)
$ (Lemma \ref{lemma_proof_3}).
The result then follows by Slutsky's theorem and the rule that $O_p(1)o_p(1)=o_p(1)$.



We begin with the following lemma.
\begin{lemma}\label{lemma_proof_1}
Assume that $\lambda\tau \rightarrow \infty$, as $\tau \rightarrow \infty$, then we have 
\begin{equation}
    \frac{N(\tau)^{2/3}}{E\left[N(\tau)^{2/3} \right]}  \xrightarrow{p} 1, \text{ as } \tau \rightarrow \infty
\end{equation}
\end{lemma}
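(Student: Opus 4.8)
Since the underlying process is a stationary Poisson process of rate $\lambda$, the edge count $N(\tau)$ is Poisson distributed with mean $\mu := \lambda\tau$, and the hypothesis $\lambda\tau \to \infty$ means $\mu \to \infty$. My plan is to first show that $N(\tau)/\mu \xrightarrow{p} 1$, then transfer this to the $2/3$ power via the continuous mapping theorem, and finally argue that the centering constant $E[N(\tau)^{2/3}]$ is asymptotically equivalent to $\mu^{2/3}$, so that the ratio in the statement converges to $1$.

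For the first step, since $\mathrm{Var}(N(\tau)) = \mu$, Chebyshev's inequality gives, for any $\epsilon>0$,
\begin{equation}
P\left(\left|\frac{N(\tau)}{\mu}-1\right|\geq \epsilon\right) \leq \frac{\mathrm{Var}(N(\tau))}{\epsilon^2\mu^2}=\frac{1}{\epsilon^2\mu}\longrightarrow 0,
\end{equation}
so $N(\tau)/\mu \xrightarrow{p} 1$. Because $x \mapsto x^{2/3}$ is continuous on $[0,\infty)$, the continuous mapping theorem yields $\left(N(\tau)/\mu\right)^{2/3}=N(\tau)^{2/3}/\mu^{2/3} \xrightarrow{p} 1$.

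It remains to show $E[N(\tau)^{2/3}]/\mu^{2/3} \to 1$, after which the conclusion follows by writing
\begin{equation}
\frac{N(\tau)^{2/3}}{E[N(\tau)^{2/3}]}=\frac{N(\tau)^{2/3}}{\mu^{2/3}}\cdot\frac{\mu^{2/3}}{E[N(\tau)^{2/3}]}
\end{equation}
and applying Slutsky's theorem. This expectation step is the crux of the argument, since convergence in probability alone does not guarantee convergence of the corresponding means. I would close the gap with a uniform integrability argument: the family $\{(N(\tau)/\mu)^{2/3}\}_\tau$ is bounded in $L^{3/2}$, because
\begin{equation}
E\left[\left(\left(\frac{N(\tau)}{\mu}\right)^{2/3}\right)^{3/2}\right]=E\left[\frac{N(\tau)}{\mu}\right]=1
\end{equation}
for every $\tau$. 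Boundedness in $L^p$ for some $p>1$ implies uniform integrability, so combining this with the convergence in probability established above, Vitali's convergence theorem gives $L^1$ convergence, i.e. $E[(N(\tau)/\mu)^{2/3}] \to 1$, which is exactly $E[N(\tau)^{2/3}]/\mu^{2/3} \to 1$.

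\textbf{Main obstacle.} The only nonroutine point is the passage from convergence in probability to convergence of the $2/3$ moments; the remaining steps are standard concentration and continuous-mapping arguments. The uniform integrability route above resolves this cleanly by exploiting the exact identity $E[((N/\mu)^{2/3})^{3/2}]=E[N/\mu]=1$, which gives a uniform $L^{3/2}$ bound for free. An alternative would be a direct asymptotic expansion of the Poisson fractional moment $E[N(\tau)^{2/3}]$ (for instance via a second-order Taylor/delta-method analysis of $x^{2/3}$ around $\mu$, controlling the remainder by the variance $\mu$), which would yield $E[N(\tau)^{2/3}]=\mu^{2/3}(1+O(\mu^{-1}))$; I would prefer the uniform integrability argument as it avoids tail estimates and keeps the calculation short.
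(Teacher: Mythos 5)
Your proof is correct, but it takes a genuinely different route from the paper's. The paper works directly with the random variable $N(\tau)^{2/3}$: it Taylor-expands the fractional moments $E[N(\tau)^{2/3}]$ and $E[N(\tau)^{4/3}]$ about the mean $\lambda\tau$ to obtain $E[N(\tau)^{2/3}] = (\lambda\tau)^{2/3} + o((\lambda\tau)^{-1/3})$ and $\mathrm{Var}[N(\tau)^{2/3}] = \Theta((\lambda\tau)^{1/3}) + o((\lambda\tau)^{-2/3})$, and then applies Chebyshev's inequality to $N(\tau)^{2/3}$ itself. You instead apply Chebyshev at the level of $N(\tau)$, transfer the convergence through the continuous map $x \mapsto x^{2/3}$, and handle the one genuinely nontrivial point --- replacing the deterministic normalizer $(\lambda\tau)^{2/3}$ by $E[N(\tau)^{2/3}]$ --- via uniform integrability, exploiting the exact identity $E\bigl[\bigl((N(\tau)/\mu)^{2/3}\bigr)^{3/2}\bigr] = E[N(\tau)/\mu] = 1$ to get a uniform $L^{3/2}$ bound and hence $L^1$ convergence by Vitali's theorem, before closing with Slutsky. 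Both arguments are sound, and yours is arguably tighter as written: the paper's truncated Taylor expansions of fractional Poisson moments require remainder control that is only sketched, whereas your identity-based uniform integrability comes for free and generalizes immediately to any power $\alpha \in (0,1]$ (and indeed to any nonnegative family whose relative fluctuations vanish). What the paper's moment expansion buys in exchange is quantitative information: identifying $\mathrm{Var}[N(\tau)^{2/3}] \asymp (\lambda\tau)^{1/3}$ yields, via Chebyshev, an explicit convergence rate of order $(\lambda\tau)^{-1}$ for fixed $\epsilon$, which the uniform-integrability route does not provide.
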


\begin{proof}
Recall that $N(\tau) \sim Poisson(\lambda \tau)$. A Taylor expansion for $N(\tau)^{2/3}$ about the mean $E\left[N(\tau) \right] =\lambda\tau$ yields that in expectation 
\begin{equation}
\begin{split}
& E\left[ N(\tau)^{2/3} \right] \\
& = 
(\lambda\tau)^{2/3} + \frac{2}{3} (\lambda\tau)^{-1/3}E\left[( N(\tau) - \lambda\tau) \right]
    - \frac{1}{9}(\lambda\tau)^{-4/3} E\left[( N(\tau) - \lambda\tau)^2\right]+\cdots \\
    & = (\lambda\tau)^{2/3} + o((\lambda\tau)^{-1/3}).
\end{split}
\end{equation}
Thus, 
\begin{equation}\label{E1}
   \big(E\left[ N(\tau)^{2/3} \right] \big)^2 = (\lambda\tau)^{4/3} + \Theta((\lambda\tau)^{1/3})+o((\lambda\tau)^{-2/3}).
\end{equation}
Similarly, Taylor expansion for $N(\tau)^{4/3}$ yields
\begin{equation}\label{E2}
    E\left[  N(\tau)^{4/3}\right] = (\lambda\tau)^{4/3} + \Theta((\lambda\tau)^{1/3}) + o((\lambda\tau)^{-2/3})
\end{equation}
Combining \eqref{E1} and \eqref{E2}, we have  
\begin{equation}
\begin{split}
        Var\left[ N(\tau)^{2/3}\right] &= E\left[  N(\tau)^{4/3}\right] - \big(E\left[ N(\tau)^{2/3} \right] \big)^2 \\
        & = \Theta((\lambda\tau)^{1/3}) + o((\lambda\tau)^{-2/3})
\end{split}
\end{equation}

By Chebyshev's inequality, we have for every $\epsilon > 0$, 
\begin{equation}
\begin{split}
    & P \left(\big|  \frac{N(\tau)^{2/3}}{E\left[N(\tau)^{2/3} \right]} - 1\big|> \epsilon \right) 
    = P\left( \big | N(\tau)^{2/3} - E[N(\tau)^{2/3}] \big |> E[N(\tau)^{2/3} ] \epsilon \right ) \\
    & \leq \frac{Var[N(\tau)^{2/3}]}{(E[N(\tau)^{2/3}])^2\epsilon^2}\\
    & = 
    \frac{1}{\epsilon^2}  \frac{\Theta((\lambda\tau)^{1/3}) + o((\lambda\tau)^{-2/3})}{(\lambda\tau)^{4/3} + \Theta((\lambda\tau)^{1/3})+o((\lambda\tau)^{-2/3})} \rightarrow 0, \text{ as } \lambda\tau \rightarrow \infty. 
\end{split}
\end{equation}
Since $\lambda\tau \rightarrow \infty$, as $\tau \rightarrow \infty$ (this holds under condition 1), the lemma is proved.
\end{proof}


\begin{lemma}\label{lemma_proof_2}
Assume that $\lambda\tau \rightarrow \infty$, as $\tau \rightarrow \infty$, then we have 
\begin{equation}
    \frac{E\left[N(\tau)^{2/3}\right]}{E\left[C(H,T(\tau))\right]} \rightarrow 0, \text{ as } \tau \rightarrow \infty.
\end{equation}
\end{lemma}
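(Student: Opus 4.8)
The plan is to bound the numerator and the denominator separately and then show that their quotient is of order $(\lambda\tau)^{-1/3}$, which vanishes since $\lambda\tau\to\infty$. A useful simplification is that, because the quotient is nonnegative, it suffices to combine an \emph{upper} bound on the numerator with a \emph{lower} bound on the denominator; I will not need a matching upper bound on $E[C(H,T(\tau))]$.

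For the numerator I would reuse the expansion already derived in the proof of Lemma~\ref{lemma_proof_1}: since $N(\tau)\sim \mathrm{Poisson}(\lambda\tau)$, one has $E\left[N(\tau)^{2/3}\right] = (\lambda\tau)^{2/3} + o((\lambda\tau)^{-1/3})$, so the numerator is of exact order $(\lambda\tau)^{2/3}$. Under condition~1 of Theorem~3.4, $\lambda \le \lambda_0 < \infty$, hence $E\left[N(\tau)^{2/3}\right] \le (\lambda_0\tau)^{2/3}(1+o(1))$, i.e. the numerator grows no faster than $\tau^{2/3}$.

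The crux is a linear lower bound on the denominator. The governing structural fact is that every instance of $H$ is a \emph{local} event, its $l$ edges all lying within a window of length proportional to $\delta$; consequently the global count is extensive in the observation length $\tau$. To make this rigorous I would partition $(0,\tau]$ into $m = \Theta(\tau/\delta)$ disjoint windows each of fixed length on the order of $l\delta$, and count only those instances of $H$ falling entirely inside a single window. Because the underlying Poisson process is independent on disjoint intervals, the per-window contributions are independent, and by stationarity each has the same expected value $q := q(\lambda,\delta,H) > 0$; summing gives $E[C(H,T(\tau))] \ge m\,q \ge c\,\tau$ for a constant $c>0$ and $\tau$ large. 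The \emph{main obstacle} is precisely the uniform positivity $q \ge q_\ast > 0$: this requires that the motif $H$ be realizable under the model (its temporal-constraint region has positive Lebesgue measure and its topological pattern can be matched, so a window contains a conforming instance with positive probability) and that $\lambda$ not collapse to $0$, i.e. that condition~1 keeps $\lambda$ within a positive range bounded away from zero. Granting this, $c>0$ can be taken uniform in $\tau$.

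Combining the two estimates completes the argument:
\[
\frac{E\left[N(\tau)^{2/3}\right]}{E\left[C(H,T(\tau))\right]}
\;\le\; \frac{(\lambda_0\tau)^{2/3}(1+o(1))}{c\,\tau}
\;=\; \frac{\lambda_0^{2/3}}{c}\,\tau^{-1/3}\,(1+o(1)) \;\longrightarrow\; 0,
\]
as $\tau\to\infty$, which is the claim. The essential content is therefore the contrast between the sublinear growth $\Theta(\tau^{2/3})$ of the numerator and the linear growth $\Theta(\tau)$ of the denominator, the extra factor of $\tau^{1/3}$ in the denominator driving the ratio to zero.
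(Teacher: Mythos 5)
Your proof is correct and reaches the same quantitative conclusion (numerator $\asymp (\lambda\tau)^{2/3}$, denominator $\gtrsim \tau$, ratio $\asymp \tau^{-1/3}$), but your lower bound on $E[C(H,T(\tau))]$ takes a genuinely different route. The paper conditions on $N(\tau)=m$, writes $C(H,T_m)$ as a sum of $\binom{m}{l}$ indicators (a U-statistic), computes the expectation exactly as $E\bigl[\binom{N(\tau)}{l}\bigr]\pi_{\delta,l,\tau}C_{|V|,k}$, and then proves the geometric claim $\pi_{\delta,l,\tau}\gtrsim \tau^{-(l-1)}$ by packing disjoint hypercubes of side $\delta/\tau$ along the diagonal of $[0,1]^l$; combined with a Poisson moment lower bound this gives $E[C(H,T(\tau))]\gtrsim C_{|V|,k}\lambda^l\tau$. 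Your windowing argument is in fact the probabilistic shadow of that hypercube packing (the $k$-th cube is exactly the event that all $l$ arrivals land in the $k$-th time window), but it is more elementary: it needs only linearity of expectation and stationarity (independence across windows is not even required for an expectation bound), and it skips both the conditional U-statistic computation and the volume estimate. What you lose is explicitness: your per-window constant $q$ is implicit and its positivity must be argued separately (realizability of $H$, $\lambda$ bounded away from $0$), whereas the paper's explicit constant $\pi_{\delta,l,\tau}C_{|V|,k}$ tracks the dependence on $|V|$, $\lambda$, and $l$, and — importantly — the identity $E[C(H,T(\tau))]\asymp \pi_{\delta,l,\tau}C_{|V|,k}E[N(\tau)^l]$ is reused verbatim in the proof of Lemma \ref{lemma_proof_3}, so the paper's heavier machinery does double duty. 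One remark on your flagged obstacle: requiring $\lambda$ to stay in a compact subset of $(0,\infty)$ is not a defect relative to the paper, since the paper's own final step (the L'H\^opital computation, and the bound $\lambda^{2/3-l}\tau^{-1/3}$ it implies) likewise treats $\lambda$ as fixed and would fail if $\lambda\to 0$ under $\lambda\tau\to\infty$; both proofs rest on the same effective assumption. Also, for the numerator you could avoid invoking Lemma \ref{lemma_proof_1}'s Taylor expansion entirely: Jensen's inequality for the concave map $x\mapsto x^{2/3}$ gives $E[N(\tau)^{2/3}]\le(\lambda\tau)^{2/3}$ directly.
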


\begin{proof}
Given that $N(\tau) \sim Poisson(\lambda \tau)$, using a recent upper bound for moments of the Poisson distribution (Theorem 1 in \cite{ahle2021sharp}), we have 
\begin{equation}\label{local_1}
   E\left[N(\tau)^{2/3}\right]\leq 
     \left(\frac{2/3}
    {\log( 2/(3\lambda\tau) + 1)} \right)^{2/3} \rightarrow \infty, \text{ as } \tau \rightarrow \infty.
\end{equation}

We now establish a lower bound for $E\left[C(H,T(\tau))\right]$ and show that it grows faster than the right-hand side of (\ref{local_1}). Noting that 
\begin{equation}
    E\left[C(H,T(\tau))\right] = E\left[ E\left[C(H,T(\tau)) \,\big | \, N(\tau)\right]\right],
\end{equation}
we first consider $E\left[C(H,T(\tau)) \,\big | \, N(\tau)=m \right]$, the expected count of $k$-node, $l$-edge $\delta$-temporal motif $H$ in $T(\tau)$, conditional on $N(\tau) = m$. Hereafter, we will write the random count variable $C(H, T(\tau))$ as $C(H, T_m)$ when conditional on $N(\tau)=m$.

Recall the property of Poisson processes that conditional on $N(\tau) = m$, the $m$ (unordered) arrival times are independent and identically distributed (i.i.d.) with a $Unif(0,\tau)$ distribution. Let $U_1,U_2,\cdots,U_m$ denote the $m$ unordered arrival times, and $e_{u_1}, e_{u_2}, \cdots, e_{u_m}$ denote the corresponding edges occurring at each of these arrival times. Under the current specification of our random graph model that edges at each arrival times are sampled uniformly among the set of all possible edges, independent of each other, then $e_{u_i}, i=1,\cdots,m$ are independent, each of which is distributed uniformly on the state space $\mathcal{E}=\{1,2,3,\cdots,|V|(|V|-1)\}$ (assume we consider directed edges), where $|V|$ is the total number of vertices in the graph.

Let $X_i = (U_i, e_{u_i})$ be a random vector, taking values in $\mathcal{X} = [0,\tau] \times \mathcal{E}$, then $X_i,i=1,\cdots m $ are independent and identically distributed conditional on $N(\tau)=m$, and the two components of each $X_i$ are also independent. We can see that conditional on $N(\tau) =m$, $\{X_1,\cdots,X_m\}$ is an unordered list of timestamped edges in $T_m =\{ (e_i, t_i)$, $i = 1, \cdots, m\}$, which carries the same information as $T_m$. 

Then conditional on $N(\tau)=m$, the random count $C(H, T_m)$ can be written as a sum of Bernoulli random variables, 
\begin{equation}\label{count_con}
    C(H,T_m)= \sum_{i_1<i_2<\cdots<i_l} h(X_{i_1}, X_{i_2},\cdots, X_{i_l}),
\end{equation}
where the sum is taken over all $m\choose l$ combinations of distinct indices $1\leq i_1<i_2<\cdots<i_l\leq m$, and the function $h: \mathcal{X}^l \rightarrow \{0,1\}$ is defined as
\begin{equation}\label{h_kernel}
\begin{split}
     & h(x_1,x_2,\cdots, x_l) \\
     &= h((u_1,e_{u_1}),(u_2,e_{u_2}),\cdots, (u_l,e_{u_1})) \\
    &= 
    I\{range(u_1, u_2,\dots,u_l) \leq \delta\}   \cdot
    I\{ e_{u_{(1)}},e_{u_{(2)}}, \cdots,e_{u_{(l)}} \text{ is isomorphic to }H
    \},
\end{split}
\end{equation}
where $u_{(1)}<u_{(2)}<\cdots<u_{(l)}$ are the order statistics of $u_1,u_2,\cdots, u_l$, $range(u_1,u_2,\cdots, u_l) = u_{(l)} - u_{(1)}$, and we say a sequence of edges is isomorphic to temporal motif $H$ if it matches the same edge pattern as $H$ and all of the edges occur in the right order regardless of time duration. 

Hence, by applying function $h$ to $l$ distinct time-stamped edges in $T_m$, we're checking whether the time ordered sequence of the $l$ edges forms an instance of the $\delta$-temporal motif $H$, and return $1$ if it does, otherwise $0$. We then obtain the total motif count by applying $h$ to all $m\choose l$ combinations of distinct time-stamped edges in $T_m$, and sum up all returned values from function $h$. 

Since $X_i,i=1,\cdots,m$ are i.i.d. conditional on $N(\tau)=m$, then taking conditional expectation on both sides in \eqref{count_con}, we have
\begin{equation}\label{eq-28}
    E\left[C(H,T_m)\,|\,N(\tau)=m \right] = {m\choose l} E\left[ h(X_1,X_2,\cdots,X_l)\right].
\end{equation}
By the independence of the two components $U_i$ and $e_{u_i}$ in $X_i$, we have 
\begin{equation}\label{eq-29}
\begin{split}
     E\left[ h(X_1,X_2,\cdots,X_l)\right]
    &= P\left(range(U_1, U_2,\dots,U_l) \leq \delta \right) \cdot \\
    & \quad \quad \quad 
    P\left( e_{u_{(1)}},e_{u_{(2)}}, \cdots,e_{u_{(l)}} \text{ is isomorphic to }H\right) \\
    & = \pi_{\delta, l,\tau} \cdot C_{|V|,k},
\end{split}
\end{equation}
where
\begin{equation}
    \begin{split}
        C_{|V|,k} &:= P\left( e_{u_{(1)}},e_{u_{(2)}}, \cdots,e_{u_{(l)}} \text{ is isomorphic to }H\right)\\
        &={|V|\choose k}k! (\frac{1}{|V|(|V|-1)})^l\asymp |V|^{-(2l-k)}, \text{ for sufficient large $|V|$ w.r.t. $k$},
    \end{split}
\end{equation}
and 
\begin{equation}
    \begin{split}
\pi_{\delta, l,\tau} &:=  P\left(range(u_1, u_2,\dots,u_l) \leq \delta  \right)\\
& = \idotsint_{B_l} f(u_1,\dots,u_l) \,du_1 \dots du_l  
\\
& = \idotsint_{B_l} \frac{1}{\tau^l} \,du_1 \dots du_l \\
 &=  \idotsint_{\tilde{B}_l} 1 \,dz_1 \dots dz_l, \quad(\text{obtained by setting } z_i=u_i/\tau).
    \end{split}
\end{equation}
where $f(\cdot)$ is the joint p.d.f of $U_1,\cdots,U_l \stackrel{i.i.d.}{\sim} Unif(0,\tau)$, and $B_l$ and $\tilde{B}_l$ are defined as follows,
\begin{equation}
\begin{split}
        B_l &= \{(u_1,u_2,\cdots,u_l) \in [0,\tau]^l: range(u_1,\cdots,u_l)\leq \delta\}, \\
\tilde{B}_l &= \{(z_1,z_2,\cdots,z_l) \in [0,1]^l: range(z_1,\cdots,z_l)\leq \frac{\delta}{\tau}\}.
\end{split}
\end{equation} 

Hence, $ \pi_{\delta, l,\tau}$ is the volume of the subspace $\tilde{B}_l$ inside the $l$-dimensional cube, then $ \pi_{\delta, l,\tau}$ is decreasing w.r.t. $\tau$, and $\pi_{\delta, l,\tau} \rightarrow 0$, as $\tau \rightarrow \infty$.

When $l=2$, $\tilde{B}_l$ is the shaded area as shown in Figure \ref{fig-integral}, we can calculate that for sufficiently large $\tau$, 
\begin{equation}
    \pi_{\delta, l,\tau} = vol(\tilde{B}_l) = 1- (1-\frac{\delta}{\tau})^2=\frac{2\delta}{\tau} - \frac{\delta^2}{\tau^2} \asymp \frac{1}{\tau}.
\end{equation}

\begin{figure}[H]
    \centering
    \includegraphics[scale=0.2]{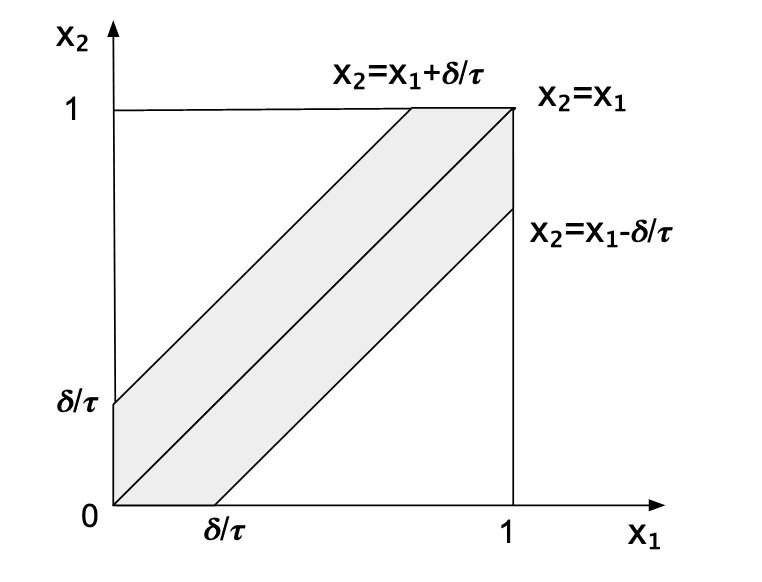}
\caption{When $l=2$, $\tilde{B}_l$ is shown as the shaded area.}
    \label{fig-integral}
\end{figure}

For $l\geq 2$, we provide the following lower bound.
\begin{claim}\label{claim}
$\pi_{\delta,l,\tau} \gtrsim \frac{1}{\tau^{l-1}}, l\geq 2.$
\end{claim}
To see so, define  
\begin{equation}
A_{l,k} = \left\{(z_1,z_2,\cdots,z_l) \in [0,1]^l: z_1, z_2,\cdots, z_l \in \left[(k-1) \frac{\delta}{\tau}, k \frac{\delta}{\tau}\right]\right\},
\end{equation}
for $k=1,2,\cdots,  \lfloor \frac{\tau}{\delta} \rfloor$, formed by the small $l$-dimensional hyper-cubes with length $\delta/\tau$ located along the main diagonal line of the large $l$-dimensional hyper-cube with length $1$. See Figure \ref{fig2} for illustration in 2- and 3-dimensional space. 

We can see that every point $(z_1,\cdots, z_l) \in A_{l,k} $ satisfies the condition $range(z_1,\cdots, z_l) \leq \delta/\tau$, and there are at most $\lfloor \frac{1}{\delta/\tau} \rfloor = \lfloor \frac{\tau}{\delta} \rfloor$ mutually disjoint hyper-cubes with length $\delta/\tau$ along the diagonal line of the unit hyper-cube, hence,
\begin{equation}
   \bigcup_{k=1}^{  \lfloor \frac{\tau}{\delta} \rfloor} A_{l,k} \subseteq \tilde{B}_l.
\end{equation}
Therefore, 
\begin{equation}
    \pi_{\delta,l,\tau} = vol(\tilde{B}_l) \geq \sum_{k=1}^{ \lfloor \frac{\tau}{\delta} \rfloor} vol(A_{l,k} ) = 
     \lfloor \frac{\tau}{\delta} \rfloor (\frac{\delta}{\tau})^l \asymp \frac{1}{\tau^{l-1}}.
\end{equation}
This completes the proof of Claim \ref{claim}. 

\begin{figure}[H]
    \centering
  \includegraphics[scale=0.20]{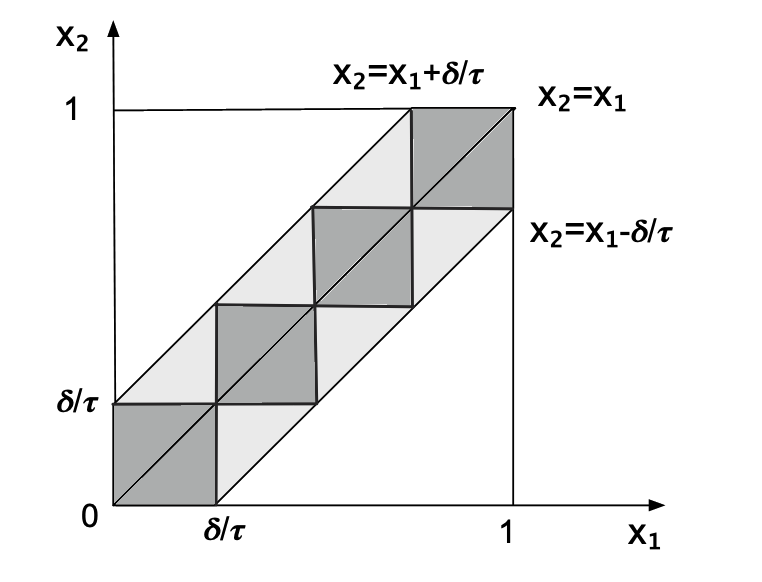}
    \includegraphics[scale=0.17]{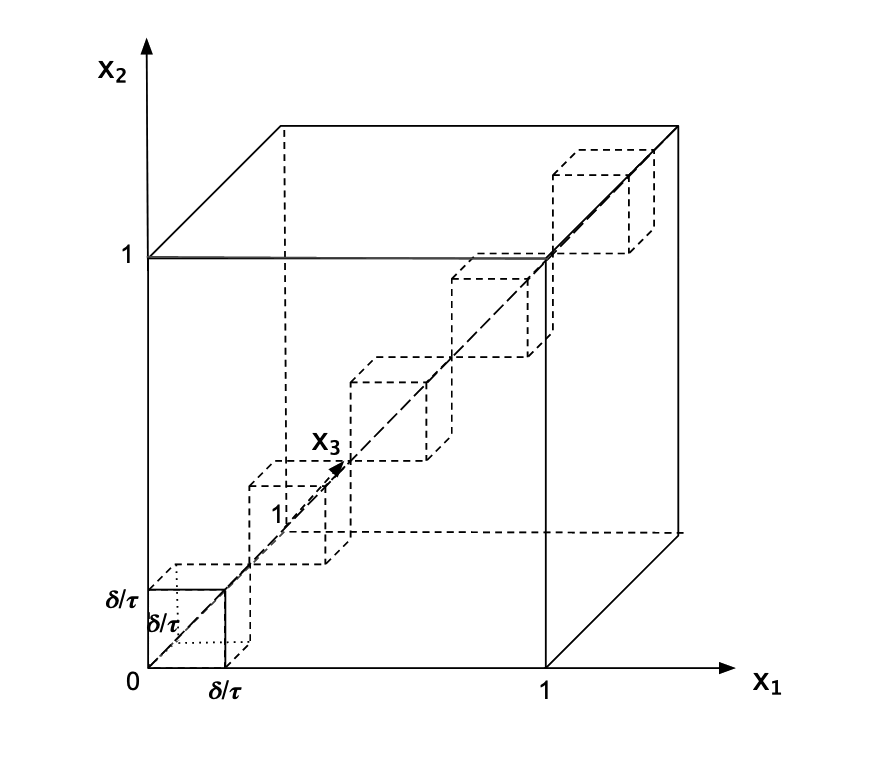}
    \caption{Illustration of $A_{l,k}$ in 2- and 3-dimensional space. Left: 2-dimensional squares with length $\delta/\tau$ located along main diagonal line of the unit square; right:  3-dimensional cubes with length $\delta/\tau$ located along main diagonal line of the unit cube.}
    \label{fig2}
\end{figure}

To complete our proof of the lemma, we note one fact of binomial coefficient $m\choose l$ is that $(\frac{1}{l})^l m^l \leq {m \choose l}\leq (\frac{e}{l})^l m^l$, so ${m\choose l} \asymp m^l$ when $m$ is sufficiently large compared with $l$. 
Combining $\eqref{eq-28}$ and $\eqref{eq-29}$, and defining ${m\choose l} := 0$ when $m<l$, we have for sufficiently large $\tau$,
\begin{equation}\label{E_c_same}
\begin{split}
      E\left[C(H,T(\tau))\right] &=  E\left[E\left[C(H,T_m)\,|\,N(\tau)=m \right] \right] \\
      &= E[{N(\tau)\choose l}] \cdot
  \pi_{\delta, l,\tau}C_{|V|,k} \\
  & = \pi_{\delta, l,\tau}C_{|V|,k} \sum_{m=0}^\infty {m \choose l } e^{-\lambda\tau} \frac{(\lambda\tau)^m}{m!} \\
  & \asymp \pi_{\delta, l,\tau}C_{|V|,k} \sum_{m=0}^\infty {m^l } e^{-\lambda\tau} \frac{(\lambda\tau)^m}{m!} \\
  & = \pi_{\delta, l,\tau}C_{|V|,k} 
  E \left[ N(\tau)^l\right] \enskip .
\end{split}
\end{equation}
Using a recent lower bound for the moments of a Poisson distribution (Section 2.2 Equation 9 in \cite{ahle2021sharp}), and claim \ref{claim}, we have 
\begin{equation}\label{E_c_lower}
     E\left[C(H,T(\tau))\right] \gtrsim 
     C_{|V|,k} \cdot 
     \frac{1}{\tau^{l-1}} \cdot 
     (\lambda\tau)^l(1+\frac{l(l-1)}{2\lambda\tau}) \gtrsim  C_{|V|,k} \cdot \lambda^l\tau
\end{equation}
Therefore, combining \eqref{local_1} and \eqref{E_c_lower}, we have 
\begin{equation}
\begin{split}
\frac{ E\left[N(\tau)^{2/3}\right]}
{ E\left[C(H,T(\tau))\right] }
  & \lesssim
  \frac{
       \left(\frac{2/3}
    {\log( 2/(3\lambda\tau) + 1)} \right)^{2/3}
  }
  {
  \lambda^l\tau
  }
  \cdot
  \frac{1}
  { C_{|V|,k} } \\
  & =        \left(\frac{2/3}
    {\log( 2/(3\lambda\tau) + 1) \cdot (\lambda^l\tau)^{3/2}} \right)^{2/3} \cdot
  \frac{1}
  { C_{|V|,k} }
  \rightarrow 0, \text{ as } \tau \rightarrow \infty.
\end{split}
\end{equation}
To see so, we apply L'Hôpital's rule on the denominator and obtain  \begin{equation}
    \lim_{\tau\rightarrow \infty} \frac{\log( 2/(3\lambda\tau) + 1)}{(\lambda^l\tau)^{-3/2}} =  \lim_{\tau\rightarrow \infty} \frac{4}{9} \cdot
    \frac{1}{\frac{2}{3\lambda{\tau}}+1} \cdot \tau^{1/2} \cdot \lambda^{\frac{3}{2}l-1} = \infty.
\end{equation}

\end{proof}


\begin{lemma}\label{lemma_proof_3}
Assume that $\lambda\tau \rightarrow \infty$, as $\tau \rightarrow \infty$, then we have 
\begin{equation}
    \frac{E\left[C(H,T(\tau))\right]}{C(H,T(\tau))} = O_p(1), \text{ as } \tau\rightarrow \infty.
\end{equation}
\end{lemma}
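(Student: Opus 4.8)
The plan is to prove the stronger statement $\frac{C(H,T(\tau))}{E[C(H,T(\tau))]}\xrightarrow{p}1$; since $x\mapsto 1/x$ is continuous at $x=1>0$, the continuous mapping theorem then yields $\frac{E[C(H,T(\tau))]}{C(H,T(\tau))}\xrightarrow{p}1$, which is in particular $O_p(1)$ (the event $\{C=0\}$ has vanishing probability once the ratio concentrates at $1$, so the reciprocal is well defined with high probability). As in Lemmas \ref{lemma_proof_1} and \ref{lemma_proof_2}, the concentration comes from Chebyshev's inequality,
\begin{equation}
P\left(\left|\frac{C(H,T(\tau))}{E[C(H,T(\tau))]}-1\right|>\epsilon\right)\le \frac{Var(C(H,T(\tau)))}{\epsilon^2\,\big(E[C(H,T(\tau))]\big)^2},
\end{equation}
so the lemma reduces to establishing $Var(C(H,T(\tau)))/\big(E[C(H,T(\tau))]\big)^2\to 0$.

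For the variance I would exploit the representation \eqref{count_con}, in which $C(H,T_m)$ is a U-statistic of degree $l$ with the bounded kernel $h$ of \eqref{h_kernel}, together with the Poissonization $N(\tau)\sim Poisson(\lambda\tau)$, via the law of total variance $Var(C)=E[Var(C\mid N)]+Var(E[C\mid N])$. The second term is immediate: by \eqref{eq-28}--\eqref{eq-29}, $E[C\mid N=m]=\binom{m}{l}\,\pi_{\delta,l,\tau}C_{|V|,k}$, so $Var(E[C\mid N])=(\pi_{\delta,l,\tau}C_{|V|,k})^2\,Var\big(\binom{N(\tau)}{l}\big)$, and a delta-method estimate gives $Var\big(\binom{N}{l}\big)\asymp l^2(\lambda\tau)^{2l-1}/(l!)^2$. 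For the first term I would use the Hoeffding decomposition of the conditional U-statistic, whose dominant contribution is the degree-one projection: $Var(C\mid N=m)\lesssim m^{2l-1}\sigma_1^2$ up to lower-order powers of $m$, where $\sigma_1^2=Var\big(E[h(X_1,\dots,X_l)\mid X_1]\big)$. Bounding $\sigma_1^2\le h_1^{\max}\cdot E[h]\lesssim \tau^{-2(l-1)}$ (times a factor in $|V|$)---since fixing one timestamped edge forces the remaining $l-1$ points into a window of width $\delta$, each costing $\delta/\tau$---and taking expectation over $N$ through Poisson moment bounds yields $E[Var(C\mid N)]\lesssim (\lambda\tau)^{2l-1}\tau^{-2(l-1)}$.

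Comparing against the squared mean $\big(E[C]\big)^2\asymp(\pi_{\delta,l,\tau}C_{|V|,k})^2(\lambda\tau)^{2l}$ from \eqref{E_c_same}, both contributions are smaller by a factor of order $1/(\lambda\tau)$, so $Var(C)/\big(E[C]\big)^2=O(1/(\lambda\tau))\to 0$ under the hypothesis $\lambda\tau\to\infty$, completing the argument. An attractive alternative that avoids the U-statistic bookkeeping is the Poincaré inequality for Poisson functionals, $Var(C)\le\int_{\mathcal X}E\big[(D_xC)^2\big]\,\mu(dx)$, where the add-one cost $D_{(t,e)}C$ is exactly the number of motif instances created by inserting edge $e$ at time $t$---a local count of the same type as $\eta(e_i)$ from Part 1, whose second moment is uniformly bounded by stationarity; integrating against the intensity $\mu$ (total mass $\lambda\tau$) again gives a bound of the correct order.

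The main obstacle is the variance estimate, specifically verifying that the degree-one Hoeffding projection dominates and that $\sigma_1^2$ carries the correct power of $\tau$. The bookkeeping is delicate because $h$ is a very sparse kernel (its mean $\pi_{\delta,l,\tau}C_{|V|,k}\to 0$) and because the degree $N(\tau)$ of the U-statistic is itself random, so the conditional and marginal fluctuations must be controlled simultaneously. Everything else is routine once $Var(C)/\big(E[C]\big)^2\to 0$ is in hand.
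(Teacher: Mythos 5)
Your route is genuinely different from the paper's. The paper never computes or bounds $Var(C(H,T(\tau)))$: it factors $E[C(H,T(\tau))]/C(H,T(\tau))$ through $N(\tau)^l$, gets concentration of $N(\tau)^l$ from Chebyshev, and then treats the conditional count $C(H,T_m)/m^l$ only \emph{qualitatively}, invoking Hoeffding's strong law for U-statistics conditional on $N(\tau)=m$ together with the observation that $P(N(\tau)=m)\rightarrow 0$ for each fixed $m$; this yields just the boundedness in probability that the lemma asks for. Your second-moment method---law of total variance, with the conditional variance controlled by the Hoeffding decomposition (or, in your alternative, by the Poincar\'e inequality for Poisson functionals, where the add-one cost is exactly a local count of the type $\eta(e_i)$ from Part 1)---proves the strictly stronger statement $C(H,T(\tau))/E[C(H,T(\tau))]\xrightarrow{p}1$, and it is quantitative. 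That is a real advantage here: the paper's Poissonization step (``each conditional probability $\rightarrow 0$ as $m\rightarrow\infty$, and $P(N(\tau)=m)\rightarrow 0$ as $\tau\rightarrow\infty$ for fixed $m$'') is delicate, because the kernel $h$ and the limit $\pi_{\delta,l,\tau}C_{|V|,k}$ both depend on $\tau$ while the dominant values of $m$ grow like $\lambda\tau$; making that interchange of limits rigorous requires precisely the kind of uniform (Chebyshev-type) bound your variance estimate supplies. So your approach is not only different but arguably more self-contained.

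One concrete caveat. Your claim that the degree-one Hoeffding projection dominates ``up to lower-order powers of $m$'' is not correct under the lemma's literal hypothesis. The $c$-overlap term contributes $\asymp m^{2l-c}\sigma_c^2$ with $\sigma_c^2\lesssim(\delta/\tau)^{2l-c-1}$ (times $|V|$-factors), since fixing $c$ points leaves only $2l-c-1$ constrained time gaps; these terms are lower order in $m$ but \emph{larger} in $\sigma_c^2$, and the two effects cancel except for powers of $\lambda$. Relative to $\big(E[C(H,T(\tau))]\big)^2\asymp(\pi_{\delta,l,\tau}C_{|V|,k})^2(\lambda\tau)^{2l}$, the $c$-th term is of order $1/(\lambda^c\tau)$, not $1/(\lambda\tau)$, so your argument needs $\lambda^l\tau\rightarrow\infty$, which does not follow from $\lambda\tau\rightarrow\infty$ if $\lambda\rightarrow 0$. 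If $\lambda$ is constant (or bounded below), everything you wrote goes through. In fairness, the paper makes the same implicit assumption elsewhere---the L'H\^opital computation closing Lemma \ref{lemma_proof_2} requires $\tau^{1/2}\lambda^{3l/2-1}\rightarrow\infty$, which likewise fails if $\lambda$ vanishes fast---so your proof sits at the same level of rigor as the paper's, but you should state the restriction on $\lambda$ explicitly rather than bury it in the phrase ``lower-order powers of $m$.''
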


\begin{proof}
Write 
\begin{equation}
\begin{split}
    \frac{E\left[C(H,T(\tau))\right]}{C(H,T(\tau))} = \frac{E\left[C(H,T(\tau))\right]/E[N(\tau)^l]}{C(H,T(\tau))/N(\tau)^l} \cdot \frac{E[N(\tau)^l]}{N(\tau)^l}.
\end{split}
\end{equation}
Note that we have shown in \eqref{E_c_same} that for sufficiently large $\tau$, 
\begin{equation}
     \frac{E\left[C(H,T(\tau))\right]}{ E \left[ N(\tau)^l\right]}  \asymp \pi_{\delta, l,\tau}C_{|V|,k}.
\end{equation}
And similarly to the proof for Lemma \eqref{lemma_proof_2}, we can show through Taylor expansion that 
\begin{equation}
\begin{split}
 E\left[  N(\tau)^{l}\right] & = (\lambda\delta)^l + o((\lambda\delta)^l), \\
 E\left[  N(\tau)^{2l}\right] & = (\lambda\delta)^{2l} + o((\lambda\delta)^{2l}), \\
        Var\left[ N(\tau)^l\right] &= E\left[  N(\tau)^{2l}\right] - \big(E\left[ N(\tau)^{l} \right] \big)^2 =o((\lambda\delta)^{2l})\, .
\end{split}
\end{equation}
Then by Chebyshev's inequality, we have for every $\epsilon > 0$, 
\begin{equation}
\begin{split}
    P \left(\big|  \frac{N(\tau)^{l}}{E\left[N(\tau)^{l} \right]} - 1\big|>\epsilon \right) 
    & \leq \frac{Var[N(\tau)^{l}]}{(E[N(\tau)^{l}])^2\epsilon^2}\\
    & = 
    \frac{1}{\epsilon^2}  \frac{o((\lambda\delta)^{2l})}
    {(\lambda\delta)^{2l} + o((\lambda\delta)^{2l})} \rightarrow 0, \text{ as } \lambda\tau \rightarrow \infty, 
\end{split}
\end{equation} 
hence $\frac{N(\tau)^l}{E[N(\tau)^l]}  \xrightarrow{p} 1$, as $\tau\rightarrow \infty$. Thus by the continuous mapping theorem, we have
\begin{equation}
\frac{E[N(\tau)^l]}{N(\tau)^l}  \xrightarrow{p} 1, \text{ as } \tau\rightarrow \infty.
\end{equation}
Therefore, in order to show $\frac{E\left[C(H,T(\tau))\right]}{C(H,T(\tau))} = O_p(1)
$, it suffices to show $    \frac{\pi_{\delta, l,\tau}C_{|V|,k}}{C(H,T(\tau))/N(\tau)^l} = O_p(1)$.

From \eqref{count_con}, we can see that $C(H, T_m)/{m\choose l}$ is a U-statistic with expectation $E[C(H, T_m)/{m\choose l}]=\pi_{\delta, l,\tau}C_{|V|,k} < \infty$. By the law of large numbers for U-statistics \cite{hoeffding1961strong}, we have $C(H, T_m)/{m\choose l} \xrightarrow{p} \pi_{\delta, l,\tau}C_{|V|,k}$, as $m\rightarrow \infty$. Since ${m \choose l} \asymp m^l$, as $m\rightarrow \infty$, then 
\begin{equation}
    C(H, T_m)/m^l \xrightarrow{p} \pi_{\delta, l,\tau}C_{|V|,k}, \text{ as } m\rightarrow \infty \enskip .
\end{equation}
Thus for an arbitrary $\gamma \in (0,1)$, we have 
\begin{equation}\label{eq:C_m_asym}
\begin{split}
    &P\left(C(H,T_m)/m^l <\gamma \cdot\pi_{\delta, l,\tau}C_{|V|,k} \right) \\
    &=     P\left(C(H,T_m)/m^l - \pi_{\delta, l,\tau}C_{|V|,k} <-(1-\gamma)\pi_{\delta, l,\tau}C_{|V|,k} \right) \\
    &\leq
    P\left(\,|\,C(H,T_m)/m^l - E[C(H,T_m)/m^l]\,|\, > (1-\gamma)\pi_{\delta, l,\tau}C_{|V|,k} \right) \\
    & \rightarrow 0, \text{ as } m \rightarrow \infty.
\end{split}
\end{equation}
Also note that for any fixed $m$,  
\begin{equation}
    P(N(\tau) = m ) = \frac{(\lambda \tau)^m}{m!\exp(\lambda \tau)} \rightarrow 0, \text{ as } \tau \rightarrow \infty.
\label{eq:C_tau_asym}
\end{equation}

Combining \eqref{eq:C_m_asym} and \eqref{eq:C_tau_asym}, we have that 
\begin{equation}
\begin{split}
     & P \left(
     \frac{\pi_{\delta, l,\tau}C_{|V|,k}}{C(H,T(\tau))/N(\tau)^l} > 1/\gamma \right)\\
     &= P\left(C(H,T(\tau))/N(\tau)^l<\gamma \cdot\pi_{\delta, l,\tau}C_{|V|,k}\right) \\
   & = \sum_{m=0}^\infty P\left(C(H,T_m)/m^l <\gamma \cdot\pi_{\delta, l,\tau}C_{|V|,k} \right) \cdot P\left(N(\tau)=m\right) \rightarrow 0, \text{ as } \tau \rightarrow \infty.
\end{split}
\end{equation}
Thus, $\frac{\pi_{\delta, l,\tau}C_{|V|,k}}{C(H,T(\tau))/N(\tau)^l}$ is bounded in probability, and so is $\frac{E\left[C(H,T(\tau))\right]}{C(H,T(\tau))}$.

\end{proof}

\bibliographystyle{siamplain}
\bibliography{ms}